\newcommand{\BibPath}{/home/milan/Documents/University/Research/BibTeX}
\newtheorem{thm}{\textbf{Theorem}}
\newtheorem{defn}{\textbf{Definition}}
\newtheorem{coro}{\textbf{Corollary}}
\newtheorem{lem}{Lemma}
\newtheorem{rem}{Remark}
\title{
Fundamental Inequalities and Identities Involving Mutual and Directed Informations in Closed-Loop Systems
} 
 \author{ 
Milan S. Derpich, Eduardo I. Silva  and Jan {\O}stergaard
 \thanks{
 M.S. Derpich and E.I Silva are with the Department of Electronic Engineering, Universidad T\'ecnica Federico Santa Mar\'ia, Casilla 110-V, Valpara\'iso, Chile
 (email: milan.derpich@usm.cl, eduardo.silva@usm.cl).
Their work was supported in part by CONICYT through grants FONDECYT 
Nr.~1120468,
Nr.~1130459,
and Anillo ACT-53.   
}
 \thanks{%
J. {\O}stergaard is with the Department of Electronic Systems, Aalborg University, Niels Jernes Vej
12, DK-9220, Aalborg, Denmark (email: janoe@ieee.org).
}
}
\begin{document}
\maketitle
\begin{abstract} 
We present several novel identities and
inequalities relating the mutual information and the directed
information in systems with feedback. 
The internal blocks within such systems are restricted only to be causal mappings, but
are allowed to be non-linear, stochastic and time varying. 
Moreover, the involved signals can be arbitrarily distributed. 
We bound the directed information
between signals inside the feedback loop by the mutual information
between signals inside and outside the feedback loop. 
This fundamental result has an interesting interpretation as a law of conservation of
information flow. 
Building upon it, 
we derive several novel identities and inequalities, which allow us to
prove some existing information inequalities under less restrictive assumptions.
Finally, we establish new relationships between nested directed informations inside a feedback loop. 
This yields a new and general data-processing inequality for systems with feedback.

\end{abstract}

\section{Introduction}\label{sec:intro}
The notion of directed information introduced by Massey in~\cite{massey90} assesses the amount of information that causally ``flows'' from a given random and ordered sequence to another. 
For this reason, it has increasingly found use in diverse applications, from
characterizing the capacity of channels with feedback~\cite{massey90,kramer98,tatmit09,li-eli11},
the rate distortion function under causality constraints~\cite{derost12},
establishing some of the fundamental limitations in networked control~\cite{tatiko00,mardah05,mardah08,silder11,silder10,silder11b}, 
determining causal relationships in neural networks~\cite{quicol11},
to 
portfolio theory and hypothesis testing~\cite{perkim11}, to name a few.

The directed information from a random%
\footnote{
Hereafter we use non-italic letters (such as $\rvax$) for random variables, denoting a particular realization by the corresponding italic character, $x$.
}
 sequence $\rvax^{k}$ to a random sequence  $\rvay^{k}$ is defined as 
\begin{align}\label{eq:directed_inf_def}
 I(\rvax^{k}\to\rvay^{k}) \eq \Sumfromto{i=1}{k}I(\rvay(i);\rvax^{i}|\rvay^{i-1}),
\end{align}
where the notation $\rvax^{i}$ represents the sequence $\rvax(1),\rvax(2),\ldots, \rvax(i)$.
The causality inherent in this definition becomes evident when comparing it with the mutual information between $\rvax^{k}$ and $\rvay^{k}$, given by $I(\rvax^{k},\rvay^{k})=\sumfromto{i=1}{k}I(\rvay(i);\rvax^{k}|\rvay^{i-1})$.
In the latter sum, what matters is the amount of information about the \textit{entire} sequence $\rvax^{k}$
present in $\rvay(i)$, given the past values $\rvay^{i-1}$.
By contrast, in the conditional mutual informations in the sum of~\eqref{eq:directed_inf_def}, only the past and current values of $\rvax^{k}$ are considered, that is, $\rvax^{i}$.
Thus,  $I(\rvax^{k}\to\rvay^{k})$ represents the amount of information causally conveyed from $\rvax^{k}$ to $\rvay^{k}$.

There exist several results characterizing the relationship between $I(\rvax^{k}\to\rvay^{k})$ and $I(\rvax^{k};\rvay^{k})$.
First, it is well known that $I(\rvax^{k}\to\rvay^{k})\leq I(\rvax^{k};\rvay^{k})$, with equality if and only if $\rvay^{k}$ is causally related to $\rvax^{k}$%
~\cite{massey90}.
A conservation law of mutual and directed information has been found in~\cite{masmas05}, which asserts that 
$I(\rvax^{k}\to\rvay^{k}) + I(0*\rvay^{k-1}\to \rvax^{k}) = I(\rvax^{k};\rvay^{k})$, where $0*\rvay^{k-1}$ denotes the concatenation $0,\rvay(1),\ldots,\rvay^{k-1}$.

Given its prominence in settings involving feedback, it is perhaps in these scenarios where the directed information becomes most important.
For instance, the directed information has been instrumental in characterizing the capacity of channels with feedback (see, e.g.,~\cite{tatmit09,kim-yh08,li-eli11} and the references therein), as well as the rate-distortion function in setups involving feedback~\cite{zamkoc08,silder11,silder11b,silder10,derost12}.

In this paper, our focus is on the relationships (inequalities and identities) involving directed and mutual informations within feedback systems, as well as between directed informations involving different signals within the corresponding feedback loop.
In order to discuss some of the existing results related to this problem,
it is convenient to consider the general feedback system shown in Fig.~\ref{fig:diagramas}-(a).
In this diagram, the blocks $\Ssp_{1},\ldots, \Ssp_{4}$ represent possibly non-linear and time-varying causal systems such that the total delay of the loop is at least one sample. 
In the same figure, $\rvar,\rvap,\rvas,\rvaq$ are exogenous random signals (scalars, vectors or sequences), which could represent, for example, any combination of disturbances, noises, random initial states or side informations.
We note that any of these exogenous signals, in combination with its corresponding deterministic mapping $\Ssp_{i}$, can also yield any desired stochastic causal mapping.

For the simple case in which all the systems $\set{\Ssp_{i}}_{i=1}^{4}$ are linear time invariant (LTI) and stable, and assuming $\rvap,\rvax,\rvaq=0$ (deterministically), it was shown in~\cite{zhasun06} that $I(\rvar^{k}\to \rvae^{k})$ does not depend on whether there is feedback from $\rvae$ to $\rvau$ or not.
Inequalities between mutual and directed informations in a less restricted setup, shown in Fig.~\ref{fig:diagramas}-(b), have been found in~\cite{mardah05,mardah08}.  
In that setting (a networked-control system), $G$ is a strictly causal LTI dynamic system having 
(vector) state sequence $\set{\rvex(i)}_{i=0}^{\infty}$, with 
$\rvex_{0}\eq\rvex(0)$ being the random initial state in its state-space representation.
The external signal $\rvar$ (which could correspond to a disturbance) is statistically independent of $\rvas$, the latter corresponding to, for example, side information or channel noise.
Both are also statistically independent of $\rvex_{0}$.
\begin{figure}[htbp]
\centering
\input{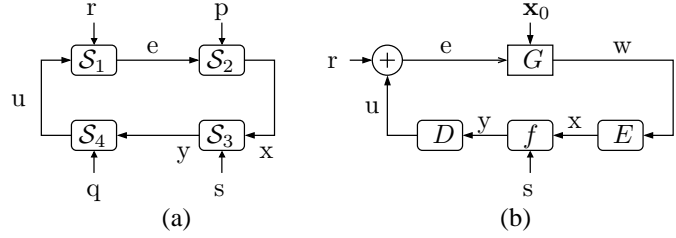}
\caption{(a): The general system considered in this work. 
(b): A special case, corresponding to the closed-loop system studied in~\cite{mardah05}.} 
\label{fig:diagramas}
\end{figure}
The blocks labeled $E$, $D$ and $f$ correspond to an encoder, a decoder and a channel, respectively, all of which are causal.
The channel $f$ maps $\rvas^{k}$ and $\rvax^{k}$ to $\rvay(k)$ in a possibly time-varying manner, i.e.,
$
 \rvay(k)=f(k,\rvax^{k},\rvas^{k}).
$
Similarly, the concatenation of the encoder, the channel and the decoder, maps $\rvas^{k}$ and $\rvaw^{k}$ to $\rvau(k)$ as a possibly time-dependent function
$
 \rvau(k)=\psi(k,\rvaw^{k},\rvas^{k}).
$
Under these assumptions, the following fundamental result was shown in~\cite[Lemma~5.1]{mardah08}:
\begin{align}\label{eq:Martins_first}
 I(\rvex_{0},\rvar^{k}\,;\,\rvau^{k})
-
I(\rvar^{k}; \rvau^{k})
\geq 
I(\rvex_{0};\rvae^{k}).
\end{align}
%
By further assuming in~\cite{mardah08} that the decoder $D$ in Fig.~\ref{fig:diagramas}-(b) is deterministic,
the following Markov chain naturally holds,
\begin{align}\label{eq:MC_martins}
 (\rvex_{0},\rvar^{k}) 
\longleftrightarrow
\rvay^{k}
\longleftrightarrow
\rvau^{k},
\end{align}
leading directly to 
%
\begin{align}\label{eq:la_de_martins08}
  I(\rvex_{0},\rvar^{k}\,;\,\rvay^{k})
-
I(\rvar^{k}; \rvau^{k})
\geq 
I(\rvex_{0};\rvae^{k}),
\end{align}
which is found in the proof of~\cite[Corollary~5.3]{mardah08}.
The deterministic nature of the decoder $D$ played a crucial role in the proof of this result,  since otherwise the Markov chain~\eqref{eq:MC_martins} does not hold, in general, due to the feedback from $\rvau$ to $\rvay$. 

Notice that both~\eqref{eq:Martins_first} and~\eqref{eq:la_de_martins08} provide lower bounds to the difference between two mutual informations, each of them relating a signal \textit{external} to the loop (such as $\rvex_{0},\rvar^{k}$) to a signal \textit{internal} to the loop (such as $\rvau^{k}$ or $\rvay^{k}$).
Instead, the inequality 
\begin{align}\label{eq:la_de_Massey}
I(\rvax^{k}\to\rvay^{k})\geq I(\rvar^{k};\rvay^{k}),
\end{align}
which holds for the system in Fig.~\ref{fig:diagramas}-(a) and appears 
in~\cite[Theorem~3]{massey90} (and rediscovered later in~\cite[Lemma~4.8.1]{tatiko00}),
involves the directed information between two internal signals and the mutual information between the second of these and an external sequence.
A related bound,  similar to~\eqref{eq:la_de_martins08} 
but involving information rates and with the leftmost mutual information 
replaced by the directed information from $\rvax^{k}$ to $\rvay^{k}$ (which are two signals internal to the loop), has been obtained in~\cite[Lemma~4.1]{mardah05}:
\begin{align}\label{eq:mardah_dir_minus_mutual}
 \bar{I}(\rvax\to \rvay) - \bar{I}(\rvar;\rvau) \geq \lim_{k\to\infty}\frac{I(\rvex(0);\rvae^{k})}{k},
\end{align}
with $\bar{I}(\rvax\to \rvay)\eq \lim_{k\to\infty}\frac{1}{k}I(\rvax^{k}\to \rvay^{k})$
and
$\bar{I}(\rvar;\rvau)\eq \lim_{k\to\infty}\frac{1}{k}I(\rvar^{k};\rvau^{k})$,
provided $\sup_{i\geq 0}\Expe{\rvex(i)^{T}\rvex(i)}<\infty$.
This result relies on
three assumptions: 
a) that the channel $f$ is memory-less and satisfies a ``conditional invertibility'' property, b) a finite-memory condition, and c) a fading-memory condition, these two related to the decoder $D$ (see Fig.~\ref{fig:diagramas}).  
It is worth noting that, as defined in~\cite{mardah05}, these assumptions upon $D$ exclude the use of side information by the decoder and/or the possibility of $D$ being affected by random noise or having a random internal state which is non-observable (please see~\cite{mardah05} for a detailed description of these assumptions). 

The inequality~\eqref{eq:la_de_Massey} has recently been extended in~\cite[Theorem~1]{li-eli11}, for the case of discrete-valued random variables and assuming $\rvas\Perp(\rvar,\rvap,\rvaq)$, as the following identity (written in terms of the signals and setup shown in Fig.~\ref{fig:diagramas}-(a)):
\begin{align}\label{eq:latest1}
 I(\rvax^{k}\to\rvay^{k})
 =
 I(\rvap^{k},\rvay^{k})
 +
 I(\rvax^{k}\to\rvay^{k}|\rvap^{k} ).
\end{align}
Letting $\rvaq=\rvas$ in Fig.~\ref{fig:diagramas}-(a) and with the additional assumption that $(\rvap,\rvas)\Perp\rvaq$, it was also shown in~\cite[Theorem~1]{li-eli11} that
\begin{align}\label{eq:latest2}
 I(\rvax^{k}\to\rvay^{k})
 =
 I(\rvap^{k};\rvay^{k})
 +
 I(\rvaq^{k-1};\rvay^{k})
+
 I(\rvap^{k};\rvaq^{k-1}|\rvay^{k}),
\end{align}
for the cases in which $\rvau(i)=\rvay(i)+\rvaq(i)$ (i.e., when the concatenation of $\Ssp_{4}$ and $\Ssp_{1}$ corresponds to a summing node).
In~\cite{li-eli11},~\eqref{eq:latest1} and~\eqref{eq:latest2} play important roles in characterizing the capacity of channels with noisy feedback.

To the best of our knowledge,%
~\eqref{eq:Martins_first},~\eqref{eq:la_de_martins08},~\eqref{eq:la_de_Massey}~\eqref{eq:mardah_dir_minus_mutual},~\eqref{eq:latest1} and~\eqref{eq:latest2}
are the only results available in the literature which lower bound the difference between an internal-to-internal directed information and an external-to-internal mutual information.
There exist even fewer published results in relation to inequalities between two directed informations involving only signals internal to the loop.
To the best of our knowledge, the only inequality of this type in the literature is the one found in the proof of 
Theorem~4.1 of~\cite{silder11}.
The latter takes the form of a (quasi) data-processing inequality for directed informations in closed-loop systems, and states that
\begin{align}\label{eq:silder11}
I(\rvax^{k}\to \rvay^{k}\parallel\rvaq^{k})\geq  I(\rvax^{k}\to \rvau^{k}),                                
\end{align}
provided%
\footnote{Here, and in the sequel, we use the notation $\rvax \Perp\rvay $ 
to mean ``$\rvax$ is independent of $\rvay$''.}
 $\rvaq\Perp(\rvar,\rvap)$
and if $\Ssp_{4}$ is such that $\rvay^{i}$ is a function of $(\rvau^{i},\rvaq^{i})$ (i.e., if $\Ssp_{4}$ is conditionally invertible) $\forall i$.
In~\eqref{eq:silder11}, 
\begin{align}
I(\rvax^{k}\to \rvay^{k}\parallel\rvaq^{k})
\eq 
\Sumfromto{i=1}{k}I(\rvay(i);\rvax^{i}|\rvay^{i-1},\rvaq^{i})
\end{align}
corresponds to the causally conditioned directed information defined in~\cite{kramer98}.   
Inequality~\eqref{eq:silder11} plays a crucial role~\cite{silder11}, since it allowed lower bounding the average data rate across a digital error-free channel by a directed information. (In~\cite{silder11}, $\rvaq$ corresponded to a random dither signal in an entropy-coded dithered quantizer.)

In this paper, we derive a set of information identities and inequalities involving pairs of sequences (internal or external to the loop) in feedback systems.
The first of these is an identity which, under an independence condition, can be interpreted as a law of conservation of information flows. 
The latter identity is the starting point for most of the results which follow it.
Among other things,
we extend~\eqref{eq:la_de_martins08} and~\eqref{eq:mardah_dir_minus_mutual} to the general setup depicted in Fig.~\ref{fig:diagramas}-(a), where 
\textit{none of the assumptions made in~\cite{mardah05,mardah08,silder11} (except causality) needs to hold}.
Moreover, we will prove the validity of~\eqref{eq:silder11} without assuming the 
conditional invertibility of $\Ssp_{4}$ nor that $\rvaq\Perp (\rvar,\rvap)$.
The latter result is one of four novel data-processing inequalities derived in Section~\ref{ssec:nested_directed}, each involving two nested directed informations valid for the system depicted in Fig.~\ref{fig:diagramas}-(a).
The last of these is a complete closed-loop counterpart of the traditional open-loop data-processing inequality.

The remainder of this paper begins with a description of the systems under study
and the extension of Massey's directed information
to the case in which each of the blocks in the loop may introduce an arbitrary, non-negative delay 
(i.e., we do not allow for anticipation).
The information identities and inequalities are presented in Section~\ref{sec:results}.
For clarity of the exposition, all the proofs are deferred to Section~\ref{sec:proofs}.
A brief discussion of potential applications of our results is presented in Section~\ref{sec:possible_applications}, which is followed by the conclusions in Section~\ref{sec:conclusions}.

\section{Preliminaries}
\subsection{System Description}
We begin by providing a formal description of the systems labeled $\Ssp_{1}\ldots \Ssp_{4}$ in Fig.~\ref{fig:diagramas}-(a). 
Their input-output relationships are given by the possibly-varying deterministic mappings%
\footnote{For notational simplicity, we omit writing their time dependency explicitly.}
\begin{subequations}\label{eq:block_defs}
 \begin{align}
 \rvae(i)&= \Ssp_{1}(\rvau^{i- d_{1}(i)},\rvar^{i}),
\\
\rvax(i)&= \Ssp_{2}(\rvae^{i- d_{2}(i)},\rvap^{i}),
\\
\rvay(i)&= \Ssp_{3}(\rvax^{i- d_{3}(i)},\rvas^{i}),
\\
\rvau(i)&= \Ssp_{4}(\rvay^{i- d_{4}(i)},\rvaq^{i}),
\end{align}
\end{subequations}
where $\rvar,\rvap,\rvas,\rvaq$ are exogenous random signals and the (possibly time-varying) delays $d_{1},d_{2},d_{3},d_{4}\in\set{0,1,\ldots}$ are such that
$$
d_{1}(k)
+
d_{2}(k)+
d_{3}(k)+
d_{4}(k)
\geq  1,\fspace \forall k\in\Nl.
$$
That is, the concatenation of $\Ssp_{1},\ldots,\Ssp_{4}$ has a delay of at least  one sample.
For every $i\in\set{1,\ldots,k}$, $\rvar(i)\in\Rl^{n_{\rvar}(i)}$, i.e., $\rvar(i)$ is a real random vector whose dimension is given by some function $n_{\rvar}:\set{1,\ldots,k}\to \Nl$.
The other sequences ($\rvaq,\rvap,\rvas,\rvax,\rvay,\rvau$) are defined likewise.

\subsection{A Necessary Modification of the Definition of Directed Information}
As stated in~\cite{massey90}, the directed information (as defined in~\eqref{eq:directed_inf_def}) is a more meaningful measure of the flow of information between $\rvax^{k}$ and $\rvay^{k}$ than the conventional mutual information $I(\rvax^{k};\rvay^{k})=\sumfromto{i=1}{k}I(\rvay(i);\rvax^{k}| \rvay^{i-1})$ when there exists causal feedback from $\rvay$ to $\rvax$.
In particular, if $\rvax^{k}$ and $\rvay^{k}$ are discrete-valued sequences, input and output, respectively, of a forward channel, and if there exists \textit{strictly causal}, perfect feedback, so that $\rvax(i)=\rvay(i-1)$  (a scenario utilized in~\cite{massey90} as part of an argument in favor of the directed information), then the mutual information becomes 
\begin{align*}
I(\rvax^{k};\rvay^{k})
&= 
H(\rvay^{k}) - H(\rvay^{k}|\rvax^{k})
=
H(\rvay^{k}) - H(\rvay^{k}|\rvay^{k-1})
=
H(\rvay^{k}) - H(\rvay(k)|\rvay^{k-1})
= 
H(\rvay^{k-1}).
\end{align*}
Thus, when strictly causal feedback is present, $I(\rvax^{k};\rvay^{k})$ fails to account for how much information about $\rvax^{k}$ has been conveyed to $\rvay^{k}$ through the forward channel that lies between them.

It is important to note that, in~\cite{massey90} (as well as in many works concerned with communications), the forward channel is instantaneous, i.e., it has no delay.
Therefore, if a feedback channel is utilized, then this feedback channel must have a delay of at least one sample, as in the example above.
However, when studying the system in Fig.~\ref{fig:diagramas}-(a), we may need to evaluate the directed information between signals $\rvax^{k}$ and $\rvay^{k}$ which are, respectively, input and output of a \textit{strictly casual} forward channel (i.e., with a delay of at least one sample), 
whose output is instantaneously fed back to its input.
In such case, if one further assumes perfect feedback and sets $\rvax(i)=\rvay(i)$, then, in the same spirit as before, 
\begin{align*}
I(\rvax^{k}\to \rvay^{k})
&= 
\Sumfromto{i=1}{k}I(\rvay(i);\rvax^{i}| \rvay^{i-1})
=
\Sumfromto{i=1}{k}\left[H(\rvay(i)|\rvay^{i-1}) - H(\rvay(i)|\rvax^{i}, \rvay^{i-1}) \right]
=
H(\rvay^{k}).  
\end{align*}  
As one can see, Massey's definition of directed information ceases to be meaningful if instantaneous feedback is utilized.

It is natural to solve this problem by recalling that, in the latter example, the forward channel had a delay, say $d$, greater than one sample.
Therefore, if we are interested in measuring how much of the information in $\rvay(k)$, not present in $\rvay^{i-1}$, was conveyed from $\rvax^{i}$ through the forward channel, we should look at the mutual information $I(\rvay(i);\rvax^{i-d}|\rvay^{i-1})$, because only the input samples $\rvax^{i-d}$ can have an influence on $\rvay(i)$.
For this reason, 
we introduce
the following, modified notion of directed information
\begin{defn}[Directed Information with Forward Delay]
\textit{In this paper, the directed information from $\rvax^{k}$ to $\rvay^{k}$ through a forward channel with a non-negative time varying delay of $d(i)$ samples is defined as
\begin{align}
 I(\rvax^{k}\to \rvay^{k}) \eq \Sumfromto{i=1}{k}I(\rvay(i);\rvax^{i-d(i)}|\rvay^{i-1}).
\end{align}}
\end{defn}
For a zero-delay forward channel, the latter definition coincides with Massey's.

Likewise, we adapt the definition of causally-conditioned directed information to the definition
\begin{align*}
 I(\rvax^{k}\to \rvay^{k}\parallel\rvae^{k} ) 
 \eq 
 \Sumfromto{i=1}{k}I(\rvay(i);\rvax^{i-d_{3}(i)}|\rvay^{i-1},\rvae^{i-d_{2}(i)}).
\end{align*}
when the signals $\rvae$, $\rvax$ and $\rvay$ are related according to~\eqref{eq:block_defs}.

Before finishing this section, it is convenient to recall
the following identity (a particular case of the chain rule of conditional mutual information~\cite{yeung-02}), which will be extensively utilized in the proofs of our results:
\begin{align}\label{eq:chainrule_I}
 I(\rvaa,\rvab;\rvac|\rvad) = I(\rvab;\rvac|\rvad) + I(\rvaa;\rvac|\rvab,\rvad).
\end{align}

\section{Information Identities and Inequalities}\label{sec:results}
\subsection{Relationships Between Mutual and Directed Informations}
We begin by stating a fundamental result, \textit{which relates the directed information between two signals within a feedback loop, say $\rvax$ and $\rvay$, to the mutual information between an external set of signals and $\rvay$}:

\begin{thm}\label{thm:main}
\textit{ In the system shown in Fig.~\ref{fig:diagramas}-(a), it holds that
\begin{align}\label{eq:main_thm}
 I(\rvax^{k}\to \rvay^{k}) 
=
 I(\rvaq^{k},\rvar^{k},\rvap^{k}\to\rvay^{k})
- 
I(\rvaq^{k},\rvar^{k},\rvap^{k}\to\rvay^{k}\parallel \rvax^{k})
\leq I(\rvap^{k},\rvaq^{k},\rvar^{k}\,;\rvay^{k}), \fspace \forall k\in\Nl,
\end{align}
with equality achieved if $\rvas$ is independent of $(\rvap,\rvaq,\rvar)$. }
\finenunciado
\end{thm}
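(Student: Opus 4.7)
The engine of the proof would be a single structural observation about the loop: by unfolding the block equations~\eqref{eq:block_defs} together with the loop-delay constraint $d_1(k)+d_2(k)+d_3(k)+d_4(k)\geq 1$, the vector $\rvax^{i-d_3(i)}$ is, at every $i$, a deterministic function of $(\rvay^{i-1},\rvaq^i,\rvar^i,\rvap^i)$---the $\geq 1$ of total loop delay blocks any dependence on the simultaneous output $\rvay(i)$. Given this, I would apply \eqref{eq:chainrule_I} twice to the quantity $I(\rvay(i);\rvax^{i-d_3(i)},\rvaq^i,\rvar^i,\rvap^i\mid\rvay^{i-1})$ to obtain
\begin{align*}
 & I(\rvay(i);\rvax^{i-d_3(i)}\mid\rvay^{i-1}) + I(\rvay(i);\rvaq^i,\rvar^i,\rvap^i\mid\rvay^{i-1},\rvax^{i-d_3(i)}) \\
 & \quad = I(\rvay(i);\rvaq^i,\rvar^i,\rvap^i\mid\rvay^{i-1}) + I(\rvay(i);\rvax^{i-d_3(i)}\mid\rvay^{i-1},\rvaq^i,\rvar^i,\rvap^i);
\end{align*}
the last term is zero by the structural observation, and summing the resulting identity over $i=1,\ldots,k$ produces the first equality in~\eqref{eq:main_thm} with no hypothesis whatsoever on the noises.

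For the inequality, I would combine Massey's bound $I(\rvaq^k,\rvar^k,\rvap^k\to\rvay^k)\leq I(\rvaq^k,\rvar^k,\rvap^k;\rvay^k)$ from~\cite{massey90} with the trivial non-negativity $I(\rvaq^k,\rvar^k,\rvap^k\to\rvay^k\parallel\rvax^k)\geq 0$, and subtract. For equality under $\rvas\Perp(\rvap,\rvaq,\rvar)$ both of these slacks must close simultaneously. To close the Massey slack I would show, term-by-term, that $\rvay(i)\Perp(\rvap,\rvaq,\rvar)_{i+1:k}\mid\rvay^{i-1},(\rvap,\rvaq,\rvar)^i$: because the structural observation forces $\rvay(i)$ to be a deterministic function of $\rvas^i$ once we condition on $\rvay^{i-1}$ and $(\rvap,\rvaq,\rvar)^i$, this reduces to $\rvas^i\Perp(\rvap,\rvaq,\rvar)_{i+1:k}\mid\rvay^{i-1},(\rvap,\rvaq,\rvar)^i$, which follows from $\rvas\Perp(\rvap,\rvaq,\rvar)$ together with the fact that $\rvay^{i-1}$ is itself a function of $(\rvas^{i-1},(\rvap,\rvaq,\rvar)^{i-1})$ by a routine semi-graphoid calculation.

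The step that will require the most care is closing the second slack, namely $I(\rvaq^k,\rvar^k,\rvap^k\to\rvay^k\parallel\rvax^k)=0$. Conditioning on the pair $(\rvay^{i-1},\rvax^{i-d_3(i)})$ can, through the nonlinear loop dynamics, couple $\rvas^{i-1}$ with the exogenous signals, so a naive d-separation argument does not close it directly. The clean path I would take is to write the posterior $p(\rvas^i\mid\rvay^{i-1},\rvax^{i-d_3(i)},(\rvap,\rvaq,\rvar)^i)$ explicitly, note that further conditioning on $(\rvap,\rvaq,\rvar)^i$ only re-slices the $\rvas^{i-1}$ posterior through the indicator that $\rvax^{i-d_3(i)}$ is consistent with those exogenous values, and then argue that the induced marginal on $\rvay(i)=\Ssp_3(\rvax^{i-d_3(i)},\rvas^i)$ is invariant to that re-slicing---this is where the hypothesis $\rvas\Perp(\rvap,\rvaq,\rvar)$ does the essential work.
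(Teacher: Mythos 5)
Your proposal is correct and follows essentially the same route as the paper: the first equality via the chain rule plus the observation that $\rvax^{i-d_{3}(i)}$ is a deterministic function of $(\rvay^{i-1},\rvaq^{i},\rvar^{i},\rvap^{i})$, the inequality by dropping the two nonnegative slack terms, and the equality case by closing both slacks using $\rvas\Perp(\rvap,\rvaq,\rvar)$ together with the loop structure. The only presentational difference is that the paper isolates the delicate conditional-independence step (your ``re-slicing'' argument) into an appendix lemma showing that the conditioning event factorizes into a product of an $\rvas$-event and a $(\rvap,\rvaq,\rvar)$-event, which is exactly the mechanism you sketch.
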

This fundamental result, which for the cases in which $\rvas\Perp(\rvap,\rvaq,\rvar)$ can be understood as a \textit{law of conservation of information flow}, is illustrated in Fig.~\ref{fig:information_flow}.
For such cases, the information causally conveyed from $\rvax$ to $\rvay$ equals the information flow from $(\rvaq,\rvar,\rvap)$ to $\rvay$.   
When $(\rvap,\rvaq,\rvar)$ are not independent of $\rvas$, part of the mutual information between $(\rvap,\rvaq,\rvar)$ and $\rvay$ 
(corresponding to the term $I(\rvaq^{k},\rvar^{k},\rvap^{k}\to\rvay^{k}\parallel \rvax^{k})$) 
can be thought of as being ``leaked'' through $\rvas$, thus bypassing the forward link from $\rvax$ to $\rvay$.
This provides an intuitive interpretation for~\eqref{eq:main_thm}. 
\begin{figure}[htpb]
 \centering
\input{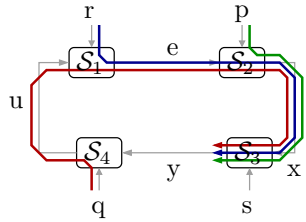}
\caption{The flow of information between exogenous signals $(\rvap,\rvaq,\rvar)$ and the internal signal $\rvay$ equals the directed information from $\rvax^{k}$ to $\rvay^{k}$ when $\rvas\Perp(\rvap,\rvaq,\rvar)$.}
\label{fig:information_flow}
\end{figure}

\begin{rem}
Theorem~\ref{thm:main} implies that $I(\rvax^{k}\to \rvay^{k})$ is only a part of (or at most equal to) the  information ``flow'' between all the exogenous signals entering the loop outside the link $\rvax\to \rvay$ 
(namely $(\rvaq,\rvar,\rvap)$), and $\rvay$.
In particular, if $(\rvap,\rvaq,\rvar)$ were deterministic, then $I(\rvax^{k}\to \rvay^{k})=0$, regardless of the blocks $\Ssp_{1},\ldots,\Ssp_{4}$ and irrespective of the nature of $\rvas$. 
\finenunciado
\end{rem}

\begin{rem}
By using~\eqref{eq:chainrule_I}, 
$
I(\rvap^{k},\rvaq^{k},\rvar^{k};\rvay^{k})
=
I(\rvar^{k};\rvay^{k})
+
I(\rvap^{k},\rvaq^{k};\rvay^{k}|\rvar^{k})
$.
Then, applying 
Theorem~\ref{thm:main}, we recover~\eqref{eq:la_de_Massey}, whenever $\rvas\Perp(\rvaq,\rvar,\rvap)$.
Thus,~\cite[Theorem~3]{massey90} and~\cite[Lemma~4.8.1]{tatiko00}) can be obtained as a corollary of Theorem~\ref{thm:main}.
\finenunciado
\end{rem}

The following result provides an inequality relating $I(\rvax^{k}\to \rvay^{k})$ with the separate flows of information
$I(\rvar^{k} ; \rvay^{k})$ and $I(\rvap^{k},\rvaq^{k}\,;\,\rvay^{k})$.
\begin{thm}\label{thm:from_splitting_more_precise}
\textit{For the system shown in Fig.~\ref{fig:diagramas}-(a), if 
$\rvas\Perp (\rvap,\rvaq,\rvar)$ and 
$\rvar^{k}\Perp (\rvap^{k},\rvaq^{k})$, then 
\begin{align}
I(\rvax^{k}\to \rvay^{k}) 
&
\geq 
I(\rvar^{k} ; \rvay^{k})+I(\rvap^{k},\rvaq^{k}\,;\,\rvay^{k}). \label{eq:logro}
\end{align}
with equality if and only if the Markov chain $(\rvap^{k},\rvaq^{k})\leftrightarrow\rvay^{k}\leftrightarrow\rvar^{k}$ holds.}
\end{thm}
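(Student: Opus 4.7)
The plan is to reduce the desired inequality to a standard chain-rule identity by first applying Theorem~\ref{thm:main}. Since the hypothesis $\rvas\Perp(\rvap,\rvaq,\rvar)$ holds, Theorem~\ref{thm:main} gives the identity $I(\rvax^{k}\to \rvay^{k})=I(\rvap^{k},\rvaq^{k},\rvar^{k};\rvay^{k})$, so proving~\eqref{eq:logro} reduces to showing
\begin{align*}
I(\rvap^{k},\rvaq^{k},\rvar^{k};\rvay^{k}) \geq I(\rvar^{k};\rvay^{k}) + I(\rvap^{k},\rvaq^{k};\rvay^{k}).
\end{align*}

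Next, I would apply~\eqref{eq:chainrule_I} to the left-hand side to obtain
\begin{align*}
I(\rvap^{k},\rvaq^{k},\rvar^{k};\rvay^{k}) = I(\rvar^{k};\rvay^{k}) + I(\rvap^{k},\rvaq^{k};\rvay^{k}\mid \rvar^{k}),
\end{align*}
so that the statement reduces to showing $I(\rvap^{k},\rvaq^{k};\rvay^{k}\mid \rvar^{k}) \geq I(\rvap^{k},\rvaq^{k};\rvay^{k})$. This is where the second independence assumption enters: using $\rvar^{k}\Perp(\rvap^{k},\rvaq^{k})$, and expanding $I(\rvap^{k},\rvaq^{k};\rvay^{k},\rvar^{k})$ in two ways via the chain rule, one finds
\begin{align*}
I(\rvap^{k},\rvaq^{k};\rvay^{k}\mid\rvar^{k}) - I(\rvap^{k},\rvaq^{k};\rvay^{k}) = I(\rvap^{k},\rvaq^{k};\rvar^{k}\mid\rvay^{k}),
\end{align*}
since the term $I(\rvap^{k},\rvaq^{k};\rvar^{k})$ vanishes by independence. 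The right-hand side is nonnegative, yielding~\eqref{eq:logro}, and it is zero exactly when the Markov chain $(\rvap^{k},\rvaq^{k})\leftrightarrow \rvay^{k}\leftrightarrow\rvar^{k}$ holds, which supplies the equality condition.

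I do not anticipate a real obstacle here: once Theorem~\ref{thm:main} is applied, the entire argument is a two-line manipulation of the chain rule combined with the independence of $\rvar^{k}$ from $(\rvap^{k},\rvaq^{k})$. The only point that requires minimal care is to verify that the independence hypothesis of Theorem~\ref{thm:main} is indeed implied by the present assumption $\rvas\Perp(\rvap,\rvaq,\rvar)$, which it is by inspection.
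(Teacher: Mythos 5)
Your proposal is correct and follows essentially the same route as the paper: invoke Theorem~\ref{thm:main} (with equality under $\rvas\Perp(\rvap,\rvaq,\rvar)$), split $I(\rvap^{k},\rvaq^{k},\rvar^{k};\rvay^{k})$ by the chain rule, and use $\rvar^{k}\Perp(\rvap^{k},\rvaq^{k})$ to identify the gap as $I(\rvap^{k},\rvaq^{k};\rvar^{k}|\rvay^{k})$, whose nonnegativity and vanishing give the inequality and the Markov-chain equality condition. No gaps.
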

Theorem~\ref{thm:from_splitting_more_precise} 
shows that, provided $(\rvap,\rvaq,\rvar)\Perp \rvas$, 
$I(\rvax^{k}\to \rvay^{k})$ is lower bounded by the sum of the individual flows from all the subsets in any given partition of $(\rvap^{k},\rvaq^{k},\rvar^{k})$, to $\rvay^{k}$, provided these subsets are mutually independent. 
Indeed,
both theorems~\ref{thm:main} and~\ref{thm:from_splitting_more_precise} can be generalized for any appropriate choice of external and internal signals.
More precisely, let $\Theta$ be the set of all external signals in a feedback system.
Let $\alpha$ and $\beta$ be two internal signals in the loop.
Define $\Theta_{\alpha,\beta}\subset \Theta$ as the set of exogenous signals which are introduced to the loop 
at every subsystem $\Ssp_{i}$ that lies in the path going from $\alpha$ to $\beta$. 
Thus, for any $\rho\in\Theta \setminus \Theta_{\alpha,\beta}$, if $\Theta_{\alpha,\beta} \Perp \Theta\setminus\Theta_{\alpha,\beta}$, we have that~\eqref{eq:main_thm} and~\eqref{eq:logro} become 
\begin{align}
I(\alpha\to\beta)&=
I(\Theta\setminus\set{\Theta_{\alpha,\beta}};\beta),
\\
 I(\alpha\to\beta) - I(\rho;\beta) &\geq I(\Theta\setminus\set{\rho\cup\Theta_{\alpha,\beta}};\beta),
\end{align}
respectively.

To finish this section, we present a stronger, non-asymptotic version of inequality~\eqref{eq:mardah_dir_minus_mutual}:
\begin{thm}\label{thm:three_full_loops}
 \textit{In the system shown in Fig.~\ref{fig:diagramas}-(a), if $(\rvar,\rvap,\rvaq,\rvas)$ are mutually independent, then}
\begin{align}\label{eq:nice}
 I(\rvax^{k}\to\rvay^{k})
&=
I(\rvar^{k};\rvau^{k}) 
+ 
I(\rvap^{k};\rvae^{k})
+ 
I(\rvaq^{k};\rvay^{k}) 
+
I(\rvap^{k};\rvau^{k}|\rvae^{k})
+ 
I(\rvar^{k},\rvap^{k};\rvay^{k}|\rvau^{k}).
\end{align}
\finenunciado
\end{thm}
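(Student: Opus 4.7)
The plan is to begin by invoking Theorem~\ref{thm:main}, which under the hypothesis $\rvas\Perp(\rvap,\rvaq,\rvar)$ (implied by the mutual independence assumed here) yields
\begin{align*}
I(\rvax^{k}\to\rvay^{k}) = I(\rvap^{k},\rvaq^{k},\rvar^{k};\rvay^{k}),
\end{align*}
and then to decompose the right-hand side by a sequence of chain-rule identities. First I would peel off $I(\rvaq^{k};\rvay^{k})$ to leave $I(\rvap^{k},\rvar^{k};\rvay^{k}\mid\rvaq^{k})$. Since $(\rvap,\rvar)\Perp\rvaq$, this conditional term equals $I(\rvap^{k},\rvar^{k};\rvay^{k},\rvaq^{k})$, which, upon augmenting by $\rvau^{k}$ (a deterministic function of $(\rvay^{k},\rvaq^{k})$), becomes $I(\rvap^{k},\rvar^{k};\rvay^{k},\rvau^{k},\rvaq^{k})$.

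The pivotal step is then to drop $\rvaq^{k}$ by establishing the conditional independence $\rvaq^{k}\Perp(\rvap^{k},\rvar^{k})\mid(\rvay^{k},\rvau^{k})$. A chain-rule application to the resulting $I(\rvap^{k},\rvar^{k};\rvay^{k},\rvau^{k})$ gives $I(\rvap^{k},\rvar^{k};\rvau^{k}) + I(\rvap^{k},\rvar^{k};\rvay^{k}\mid\rvau^{k})$, the second summand matching the final term of the target identity. A further split $I(\rvap^{k},\rvar^{k};\rvau^{k}) = I(\rvar^{k};\rvau^{k})+I(\rvap^{k};\rvau^{k}\mid\rvar^{k})$ leaves me to show $I(\rvap^{k};\rvau^{k}\mid\rvar^{k}) = I(\rvap^{k};\rvae^{k},\rvau^{k})$. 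Because $\rvae^{k}$ is a function of $(\rvau^{k},\rvar^{k})$ and $\rvap\Perp\rvar$, this reduces to a second conditional independence, $\rvap^{k}\Perp\rvar^{k}\mid(\rvae^{k},\rvau^{k})$. The claimed identity then follows by writing $I(\rvap^{k};\rvae^{k},\rvau^{k}) = I(\rvap^{k};\rvae^{k}) + I(\rvap^{k};\rvau^{k}\mid\rvae^{k})$.

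The main obstacle will be proving the two conditional-independence identities $\rvaq^{k}\Perp(\rvap^{k},\rvar^{k})\mid(\rvay^{k},\rvau^{k})$ and $\rvap^{k}\Perp\rvar^{k}\mid(\rvae^{k},\rvau^{k})$, since the blocks $\Ssp_{1},\ldots,\Ssp_{4}$ are allowed to be non-linear, stochastic, and time-varying. I would dispatch them by a d-separation argument on the time-unrolled causal graph of the loop, which is acyclic by virtue of the delay condition $d_{1}(k)+d_{2}(k)+d_{3}(k)+d_{4}(k)\geq 1$: every directed path from $\rvaq^{k}$ to either $\rvap^{k}$ or $\rvar^{k}$ must traverse $\rvay^{k}$ or $\rvau^{k}$ at a non-collider node and is therefore blocked, and every path between $\rvap^{k}$ and $\rvar^{k}$ that would be opened by conditioning on the collider $\rvae^{k}$ must also pass through $\rvau^{k}$, which lies in the conditioning set.
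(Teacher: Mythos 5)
Your proposal is correct and follows essentially the same route as the paper: starting from Theorem~\ref{thm:main} and reducing everything by chain rules to the two Markov chains $\rvaq^{k}\leftrightarrow(\rvay^{k},\rvau^{k})\leftrightarrow(\rvar^{k},\rvap^{k})$ and $\rvar^{k}\leftrightarrow(\rvae^{k},\rvau^{k})\leftrightarrow\rvap^{k}$, your intermediate steps simply inline the decomposition the paper obtains by citing Theorem~\ref{thm:finally}, and your final expansion of $I(\rvap^{k};\rvau^{k}|\rvar^{k})$ coincides with the paper's. The only real difference is that you justify the two conditional independences by d-separation on the time-unrolled graph (where the relevant trails are undirected rather than directed, so that part should be phrased with care), whereas the paper's Lemma~\ref{lem:not_so_obvious}, applied with $\Ssp_{1},\Ssp_{2},\Ssp_{3}$ (respectively $\Ssp_{2},\Ssp_{3},\Ssp_{4}$) lumped into a single causal map with exogenous input $(\rvar,\rvap,\rvas)$ (respectively $(\rvap,\rvas,\rvaq)$), already delivers exactly these Markov chains and could be invoked directly.
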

As anticipated, Theorem~\ref{thm:three_full_loops} can be seen as an extension of~\eqref{eq:mardah_dir_minus_mutual} 
 to the more general setup shown in Fig.~\ref{fig:diagramas}-(a), where the assumptions made 
in~\cite[Lemma~4.1]{mardah05} do not need to hold. 
In particular, letting the decoder $D$ and $\rvax_{0}$ in Fig.~\ref{fig:diagramas}-(b) correspond to $\Ssp_{4}$ and $\rvap^{k}$ in Fig.~\ref{fig:diagramas}-(a), respectively, we see that inequality~\eqref{eq:logro} holds even if $D$ and $E$ have dependent initial states, or if the internal state of $D$ is not observable~\cite{gogrsa01}.

Theorem~\ref{thm:three_full_loops} also admits an interpretation in terms of information flows.
This can be appreciated in the diagram shown in Fig.~\ref{fig:flujos2}, which 
depicts the individual full-turn flows (around the entire feedback loop) stemming from $\rvaq$, $\rvar$ and $\rvap$.
Theorem~\ref{thm:three_full_loops} states that the sum of these individual flows is a lower bound for the 
directed information from $\rvax$ to $\rvay$, provided $\rvaq,\rvar,\rvap,\rvas$ are independent.
\begin{figure}[htbp]
 \centering
 \input{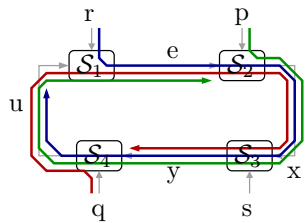}
 \caption{A representation of the three first information flows on the right-hand-side of~\eqref{eq:nice}.}
 \label{fig:flujos2}
\end{figure}

\subsection{Relationships Between Nested Directed Informations}\label{ssec:nested_directed}
This section presents three closed-loop versions of the data processing inequality \textit{relating two directed informations}, both between pairs of signals \textit{internal} to the loop.
As already mentioned in Section~\ref{sec:intro},
to the best of our knowledge, the first inequality of this type to appear in the literature is the one in
Theorem~4.1 in~\cite{silder11} (see~\eqref{eq:silder11}).
Recall that the latter result stated that 
$I(\rvax^{k}\to\rvay^{k}\parallel \rvaq^{k})\geq I(\rvax^{k}\to\rvau^{k})$,
requiring
$\Ssp_{4}$ to be such that $\rvay^{i}$ is a deterministic function of $(\rvau^{i},\rvaq^{i})$ and that $\rvaq\Perp(\rvar,\rvap)$.
The following result presents another inequality which also relates two nested directed informations, namely, 
$
I(\rvax^{k}\to \rvay^{k})$ and $I(\rvae^{k}\to \rvay^{k})
$,  
but requiring only that $\rvas\Perp (\rvaq,\rvar,\rvap)$. 
\begin{thm}\label{thm:DPI_dir_dir}
\textit{For the closed-loop system in Fig.~\ref{fig:diagramas}-(b), if 
$(\rvaq,\rvar,\rvap)\Perp \rvas$, then
 \begin{align}
  I(\rvax^{k}\to \rvay^{k}) &\geq I(\rvae^{k}\to \rvay^{k}).
 \end{align}}
 \finenunciado
\end{thm}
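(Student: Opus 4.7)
The plan is to reduce the claim to an application of Theorem~\ref{thm:main} followed by a conditional data-processing argument. Under the hypothesis $(\rvaq,\rvar,\rvap)\Perp\rvas$, Theorem~\ref{thm:main} yields the identity
\begin{align*}
I(\rvax^{k}\to\rvay^{k}) = I(\rvap^{k},\rvaq^{k},\rvar^{k};\rvay^{k}).
\end{align*}
Using the chain rule for mutual information, the right-hand side decomposes as $\sum_{i=1}^{k} I(\rvay(i);\rvap^{k},\rvaq^{k},\rvar^{k}\,|\,\rvay^{i-1})$. I would compare this term by term with the directed information $I(\rvae^{k}\to\rvay^{k}) = \sum_{i=1}^{k} I(\rvay(i);\rvae^{i-d(i)}\,|\,\rvay^{i-1})$, where $d(i)$ is the composite forward delay of the cascade $\Ssp_{2}\circ\Ssp_{3}$ from $\rvae$ to $\rvay$.

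The heart of the argument is to show, for every $i$, the per-index inequality
\begin{align*}
I(\rvay(i);\rvae^{i-d(i)}\,|\,\rvay^{i-1}) \leq I(\rvay(i);\rvap^{k},\rvaq^{k},\rvar^{k}\,|\,\rvay^{i-1}).
\end{align*}
To this end, I would unroll the system equations~\eqref{eq:block_defs}: since $\rvae$ is produced by $\Ssp_{1}$ from $(\rvau,\rvar)$ and $\rvau$ by $\Ssp_{4}$ from $(\rvay,\rvaq)$, neither $\rvap$ nor $\rvas$ enters the recursion for $\rvae$ directly. Thus $\rvae^{i-d(i)}$ is a deterministic function of $(\rvaq^{k},\rvar^{k})$ together with the $\rvay$-samples that feed the $\Ssp_{4}$-block. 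The loop-delay constraint $d_{1}(k)+d_{2}(k)+d_{3}(k)+d_{4}(k)\geq 1$ for every $k$ guarantees that all such $\rvay$-samples have time indices strictly less than $i$, so that, conditional on $\rvay^{i-1}$, the vector $\rvae^{i-d(i)}$ is a deterministic function of $(\rvaq^{k},\rvar^{k})$ alone. The conditional data-processing inequality then delivers $I(\rvay(i);\rvae^{i-d(i)}\,|\,\rvay^{i-1})\leq I(\rvay(i);\rvaq^{k},\rvar^{k}\,|\,\rvay^{i-1})$, and a single application of~\eqref{eq:chainrule_I} together with the non-negativity of mutual information upgrades this to the desired bound by adjoining $\rvap^{k}$ on the right.

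Summing the per-index inequalities over $i=1,\ldots,k$ and combining with the identity supplied by Theorem~\ref{thm:main} yields $I(\rvax^{k}\to\rvay^{k})\geq I(\rvae^{k}\to\rvay^{k})$. The main obstacle I anticipate is the causal bookkeeping in the unrolling step: one must carefully track which $\rvay$-indices appear when expressing $\rvae^{i-d(i)}$ through the nested compositions $\Ssp_{1}\circ\Ssp_{4}$, and show that the per-time loop-delay inequality forces them all to lie strictly before time $i$. Once that causal picture is secured, the remainder of the proof consists of invoking the data-processing inequality and routine manipulations via~\eqref{eq:chainrule_I}.
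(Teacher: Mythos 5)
Your proposal is correct and follows essentially the same route as the paper: both arguments pass through the intermediate quantity $I(\rvaq^{k},\rvar^{k};\rvay^{k})$ (you adjoin $\rvap^{k}$ as well), using the equality case of Theorem~\ref{thm:main} for the link $\rvax\to\rvay$ on one side. Your term-by-term unrolling of the loop, showing that given $\rvay^{i-1}$ the samples $\rvae^{i-d(i)}$ are a deterministic function of $(\rvaq^{k},\rvar^{k})$, simply re-derives what the paper obtains by citing the inequality part of Theorem~\ref{thm:main} applied to the relabeled link from $\rvae$ to $\rvay$, so the content is the same.
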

Notice that Theorem~\ref{thm:DPI_dir_dir} does not require $\rvap$ to be independent of $\rvar$ or $\rvaq$.
This may seem counter-intuitive upon noting that $\rvap$ enters the loop between the link from $\rvae$ to $\rvax$. 

The following theorem is an identity between two directed informations involving only internal signals.
It can also be seen as a  complement to Theorem~\ref{thm:DPI_dir_dir}, since it can be directly applied to establish the relationship
between $ I(\rvae^{k}\to \rvay^{k})$
and $I(\rvae^{k}\to \rvau^{k})$.

\begin{thm}\label{thm:finally}
\textit{For the system shown in Fig.~\ref{fig:diagramas}-(a), 
if $(\rvaq,\rvas)\Perp (\rvar,\rvap)$, then
\begin{align}\label{eq:finally0}
 I(\rvax^{k}\to \rvay^{k})
\leq 
I(\rvax^{k}\to \rvau^{k})
+
I(\rvaq^{k}\,;\, \rvay^{k})
+
I(\rvar^{k},\rvap^{k}; \rvay^{k}| \rvau^{k})
+ I(\rvaq^{k};\rvar^{k}|\rvau^{k},\rvay^{k}).
\end{align}
with equality if, in addition, $\rvaq\Perp\rvas$. 
In the latter case, it holds that
\begin{align}\label{eq:finally}
I(\rvax^{k}\to \rvay^{k})
=
I(\rvax^{k}\to \rvau^{k})
+
I(\rvaq^{k}\,;\, \rvay^{k})
+
I(\rvar^{k},\rvap^{k}; \rvay^{k}| \rvau^{k}).
\end{align}
}\finenunciado
\end{thm}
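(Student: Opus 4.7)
The plan is to derive Theorem~\ref{thm:finally} by combining two applications of Theorem~\ref{thm:main}---one direct, one in its generalized form---with a chain-rule decomposition. Directly, Theorem~\ref{thm:main} applied to the link from $\rvax$ to $\rvay$ gives the unconditional bound $I(\rvax^{k}\to\rvay^{k}) \leq I(\rvap^{k},\rvaq^{k},\rvar^{k};\rvay^{k})$, with equality whenever $\rvas \Perp (\rvap,\rvaq,\rvar)$. Under $(\rvaq,\rvas)\Perp(\rvar,\rvap)$ this is a proper inequality; adding $\rvaq\Perp\rvas$ makes $\rvas$ independent of the triple $(\rvap,\rvaq,\rvar)$, promoting the bound to an equality and thus handling the transition from \eqref{eq:finally0} to \eqref{eq:finally}.

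Next, I would invoke the generalization of Theorem~\ref{thm:main} noted in the paragraph following Theorem~\ref{thm:from_splitting_more_precise}, applied now to the link from $\rvax$ to $\rvau$. For this link the exogenous signals entering the path are $\{\rvas,\rvaq\}$ and those external to it are $\{\rvar,\rvap\}$, so the hypothesis $(\rvaq,\rvas)\Perp(\rvar,\rvap)$ is exactly what is required for equality, yielding $I(\rvax^{k}\to\rvau^{k}) = I(\rvar^{k},\rvap^{k};\rvau^{k})$. This identity is the key ``bridge'' that lets the directed information $I(\rvax^{k}\to\rvau^{k})$ appear on the right-hand side.

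With these two ingredients in hand, the proof reduces to a chain-rule expansion of $I(\rvap^{k},\rvaq^{k},\rvar^{k};\rvay^{k})$. Splitting off the $\rvaq^{k}$ part and using the marginal independence $\rvaq\Perp(\rvar,\rvap)$ (implied by the hypothesis) gives $I(\rvap^{k},\rvaq^{k},\rvar^{k};\rvay^{k}) = I(\rvaq^{k};\rvay^{k}) + I(\rvar^{k},\rvap^{k};\rvay^{k},\rvaq^{k})$. Because $\rvau^{k}$ is a deterministic function of $(\rvay^{k},\rvaq^{k})$ through $\Ssp_{4}$, I may insert $\rvau^{k}$ into the second mutual information for free and then expand it via \eqref{eq:chainrule_I} as $I(\rvar^{k},\rvap^{k};\rvau^{k}) + I(\rvar^{k},\rvap^{k};\rvay^{k}\,|\,\rvau^{k}) + I(\rvar^{k},\rvap^{k};\rvaq^{k}\,|\,\rvay^{k},\rvau^{k})$. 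Substituting $I(\rvar^{k},\rvap^{k};\rvau^{k}) = I(\rvax^{k}\to\rvau^{k})$ from the second step and combining with the first step's bound produces~\eqref{eq:finally0}.

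The main obstacle I anticipate is reconciling the residual cross-term. The natural decomposition above delivers $I(\rvar^{k},\rvap^{k};\rvaq^{k}\,|\,\rvay^{k},\rvau^{k})$, whereas the theorem as stated carries only $I(\rvaq^{k};\rvar^{k}\,|\,\rvau^{k},\rvay^{k})$; closing this gap either requires a more targeted chain-rule ordering that peels off $\rvap$ from $\rvar$ at an earlier stage, or a structural argument that $\rvaq^{k}$ is conditionally independent of $\rvap^{k}$ given $(\rvar^{k},\rvay^{k},\rvau^{k})$. The same difficulty reappears in the equality case~\eqref{eq:finally}: there the residual term must collapse to zero under the stronger independence $\rvaq\Perp\rvas$ (which, combined with the hypothesis, makes $\rvas\Perp(\rvaq,\rvar,\rvap)$), so one must show that $\rvaq^{k}$ and $(\rvar^{k},\rvap^{k})$ become conditionally independent given $(\rvay^{k},\rvau^{k})$ once all four exogenous sources are independent. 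This is the step I expect to be the most delicate and to require the specific structure of~\eqref{eq:block_defs} together with the deterministic nature of $\Ssp_{4}$ conditioned on $(\rvay^{k},\rvaq^{k})$.
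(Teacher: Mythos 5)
Your skeleton is the same as the paper's: Theorem~\ref{thm:main} applied to the link $\rvax\to\rvau$ (with path noise $(\rvas,\rvaq)$ and external pair $(\rvar,\rvap)$) to get $I(\rvax^{k}\to\rvau^{k})=I(\rvar^{k},\rvap^{k};\rvau^{k})$, Theorem~\ref{thm:main} applied to $\rvax\to\rvay$ to get the upper bound $I(\rvar^{k},\rvap^{k},\rvaq^{k};\rvay^{k})$, and then chain-rule manipulations that use $\rvaq^{k}\Perp(\rvar^{k},\rvap^{k})$ and the fact that $\rvau^{k}$ is a deterministic function of $(\rvay^{k},\rvaq^{k})$. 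Up to the ordering of the chain-rule steps this is exactly the paper's derivation. But the proposal stops where the work actually is, and the two issues you flag are genuine gaps, not cosmetic ones.

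First, the equality case: you correctly observe that $\rvaq\Perp\rvas$ together with the hypothesis makes $\rvas\Perp(\rvar,\rvap,\rvaq)$, which upgrades step one to an equality; but \eqref{eq:finally} also requires the residual term $I(\rvaq^{k};\rvar^{k}|\rvau^{k},\rvay^{k})$ (indeed the larger term $I(\rvaq^{k};\rvar^{k},\rvap^{k}|\rvau^{k},\rvay^{k})$) to vanish, and you only conjecture this. The paper proves it with Lemma~\ref{lem:not_so_obvious}: regroup the loop as the cascade $\Ssp_{1},\Ssp_{2},\Ssp_{3}$ (driven by $(\rvar,\rvap,\rvas)$, mapping $\rvau$ to $\rvay$) in feedback with $\Ssp_{4}$ (driven by $\rvaq$); when $\rvaq\Perp(\rvar,\rvap,\rvas)$ this yields the Markov chain $(\rvar^{k},\rvap^{k})\leftrightarrow(\rvau^{k},\rvay^{k})\leftrightarrow\rvaq^{k}$, which kills the cross term. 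You anticipated that "the specific structure of~\eqref{eq:block_defs}" is needed, but without this regrouping argument the equality claim is unproven. Second, for the inequality \eqref{eq:finally0} your decomposition delivers $I(\rvar^{k},\rvap^{k};\rvaq^{k}|\rvau^{k},\rvay^{k})$, which is in general \emph{larger} than the stated $I(\rvaq^{k};\rvar^{k}|\rvau^{k},\rvay^{k})$, so the mismatch cannot be absorbed by a further upper bound; and the structural fix you propose, namely $\rvaq^{k}\Perp\rvap^{k}$ given $(\rvar^{k},\rvay^{k},\rvau^{k})$, is false under the stated hypothesis alone: take $k=1$, $\rvar$ and $\rvae$ deterministic, $\rvax=\rvap$, $\rvas=\rvaq$ a uniform bit independent of the uniform bit $\rvap$, $\rvay=\rvax\oplus\rvas$, and $\rvau$ constant; then $(\rvaq,\rvas)\Perp(\rvar,\rvap)$ holds but conditioning on $(\rvar,\rvay,\rvau)$ makes $\rvap$ and $\rvaq$ perfectly dependent. (This is in fact the one step where the paper itself is terse --- it passes from $I(\rvar^{k},\rvap^{k};\rvay^{k},\rvaq^{k}|\rvau^{k})$ to the term with $\rvar^{k}$ only under the label of the chain rule --- so you have put your finger on a real subtlety, but your proposal neither resolves it nor recovers the theorem as stated.)
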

Notice that, by requiring additional independence conditions upon the exogenous signals (specifically, $\rvaq\Perp\rvas$), Theorem~\ref{thm:finally} (and, in particular,~\eqref{eq:finally})  yields 
\begin{align}\label{eq:la_algo_mejor}
I(\rvax^{k}\to \rvay^{k})
\geq
I(\rvax^{k}\to \rvau^{k}),
\end{align}
which strengthens 
the inequality in~\cite[Theorem~4.1]{silder11} (stated above in~\eqref{eq:silder11}).
More precisely,~\eqref{eq:la_algo_mejor} does not require conditioning one of the directed informations and holds irrespective of the invertibility of the mappings in the loop.

A closer counterpart of~\eqref{eq:silder11} (i.e., of~\cite[Theorem~4.1]{silder11}), involving 
$ I(\rvax^{k}\to\rvay^{k}\parallel \rvaq^{k})$, is presented next.
\begin{thm}\label{thm:xtoycond}
\textit{For the system shown in Fig.~\ref{fig:diagramas}-(a), 
if 
$(\rvaq,\rvas)\Perp (\rvar,\rvap)$, then 
\begin{align}\label{eq:xtoyxtoucond}
 I(\rvax^{k}\to\rvay^{k}| \rvaq^{k})
=
I(\rvax^{k}\to\rvau^{k})
+
I(\rvar^{k},\rvap^{k}; \rvay^{k}| \rvau^{k})
 + I(\rvaq^{k};\rvar^{k}|\rvau^{k},\rvay^{k})
\overset{(\dagger)}{=}
I(\rvax^{k}\to\rvay^{k}\parallel \rvaq^{k}).
\end{align}}
\textit{
where the equality labeled $(\dagger)$ hods if, in addition,  
the Markov chain
\begin{align}\label{eq:MC_q_and_s}
\rvaq_{i+1}^{k}
\longleftrightarrow
\rvaq^{i}
\longleftrightarrow
\rvas^{i}
\end{align}
is satisfied for all $i\in\set{1,\ldots,k}$. 
}
\finenunciado
\end{thm}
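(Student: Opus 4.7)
The proof proceeds in two parts: first I establish the initial identity in~\eqref{eq:xtoyxtoucond}, and then I prove the equality $(\dagger)$ under the additional Markov condition~\eqref{eq:MC_q_and_s}.

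\textbf{Part 1 (the first equality).} The strategy is to mirror the steps used to prove Theorem~\ref{thm:finally}, but conditioning throughout on $\rvaq^{k}$. The hypothesis $(\rvaq,\rvas)\Perp(\rvar,\rvap)$ implies the conditional independence $\rvas\Perp(\rvar,\rvap)|\rvaq^{k}$, which plays, for the conditioned identity, the role that $\rvas\Perp(\rvar,\rvap,\rvaq)$ plays in Theorem~\ref{thm:main}. Invoking the conditional-on-$\rvaq^{k}$ version of Theorem~\ref{thm:main} applied to the $\rvax\to\rvay$ link yields $I(\rvax^{k}\to\rvay^{k}|\rvaq^{k}) = I(\rvar^{k},\rvap^{k};\rvay^{k}|\rvaq^{k})$, absorbing the $I(\rvaq^{k};\rvay^{k})$ term that appears unconditionally in Theorem~\ref{thm:finally}. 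I then introduce $\rvau^{k}$ via the chain rule~\eqref{eq:chainrule_I}; since $\rvau^{k}$ is a deterministic function of $(\rvay^{k},\rvaq^{k})$ by the loop equations~\eqref{eq:block_defs}, the term $I(\rvar^{k},\rvap^{k};\rvau^{k}|\rvay^{k},\rvaq^{k})$ vanishes, producing
\begin{align*}
I(\rvar^{k},\rvap^{k};\rvay^{k}|\rvaq^{k}) = I(\rvar^{k},\rvap^{k};\rvau^{k}|\rvaq^{k}) + I(\rvar^{k},\rvap^{k};\rvay^{k}|\rvau^{k},\rvaq^{k}).
\end{align*}
The first summand is expanded using the generalized form of Theorem~\ref{thm:main} applied to the internal sub-link $\rvax\to\rvau$, whose in-path externals are $(\rvas,\rvaq)$ and out-of-path externals are $(\rvar,\rvap)$; the equality hypothesis $(\rvaq,\rvas)\Perp(\rvar,\rvap)$ is in force, so $I(\rvax^{k}\to\rvau^{k}) = I(\rvar^{k},\rvap^{k};\rvau^{k})$, and combining with the unconditional independence $\rvaq^{k}\Perp(\rvar^{k},\rvap^{k})$ gives $I(\rvar^{k},\rvap^{k};\rvau^{k}|\rvaq^{k}) = I(\rvax^{k}\to\rvau^{k}) + I(\rvaq^{k};\rvar^{k},\rvap^{k}|\rvau^{k})$. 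A final chain-rule rearrangement, exploiting the conditional-independence relations imposed by the loop structure, collects the remaining $\rvap^{k}$-contributions into $I(\rvar^{k},\rvap^{k};\rvay^{k}|\rvau^{k})$ and the cross-term $I(\rvaq^{k};\rvar^{k}|\rvau^{k},\rvay^{k})$, producing the asserted identity.

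\textbf{Part 2 (the equality $(\dagger)$).} Expanding,
\begin{align*}
I(\rvax^{k}\to\rvay^{k}|\rvaq^{k}) &= \sum_{i=1}^{k} I(\rvay(i);\rvax^{i-d_{3}(i)}|\rvay^{i-1},\rvaq^{k}),\\
I(\rvax^{k}\to\rvay^{k}\parallel\rvaq^{k}) &= \sum_{i=1}^{k} I(\rvay(i);\rvax^{i-d_{3}(i)}|\rvay^{i-1},\rvaq^{i}),
\end{align*}
so it suffices to prove per-term equality. Since $\rvay(i) = \Ssp_{3}(\rvax^{i-d_{3}(i)},\rvas^{i})$, this reduces to establishing the conditional independence $\rvas^{i} \Perp \rvaq_{i+1}^{k} | (\rvaq^{i},\rvay^{i-1},\rvax^{i-d_{3}(i)})$. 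The key structural observation is that, because the total loop delay is at least one sample, unrolling~\eqref{eq:block_defs} expresses $(\rvay^{i-1},\rvax^{i})$ as a deterministic function of $(\rvar^{i},\rvap^{i},\rvaq^{i-1},\rvas^{i-1})$. Combining this with $(\rvaq,\rvas)\Perp(\rvar,\rvap)$ and the Markov chain~\eqref{eq:MC_q_and_s}, a direct factorization gives
\begin{align*}
p(\rvas^{i},\rvar^{i},\rvap^{i},\rvaq_{i+1}^{k}|\rvaq^{i}) = p(\rvas^{i}|\rvaq^{i})\,p(\rvar^{i},\rvap^{i})\,p(\rvaq_{i+1}^{k}|\rvaq^{i}),
\end{align*}
from which the required conditional independence is immediate, since conditioning further on deterministic functions $(\rvay^{i-1},\rvax^{i-d_{3}(i)})$ of already-factorized variables preserves independence.

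\textbf{Main obstacle.} I expect the principal difficulty to lie in Part~1, where obtaining exactly the cross-term $I(\rvaq^{k};\rvar^{k}|\rvau^{k},\rvay^{k})$ (rather than a broader form such as $I(\rvaq^{k};\rvar^{k},\rvap^{k}|\rvau^{k},\rvay^{k})$) demands careful bookkeeping of which conditional independences survive throughout the chain-rule manipulations; tracking how dependencies among $(\rvar,\rvap,\rvaq,\rvas)$ propagate to the internal signals $(\rvae,\rvax,\rvay,\rvau)$ is the delicate step. Part~2, by contrast, reduces to a factorization computation once the structural dependence of $(\rvay^{i-1},\rvax^{i})$ on the exogenous histories is correctly read off from the loop equations.
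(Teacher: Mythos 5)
Your Part~1 follows the paper's route in substance (reduce $I(\rvax^{k}\to\rvay^{k}|\rvaq^{k})$ to $I(\rvar^{k},\rvap^{k};\rvay^{k}|\rvaq^{k})$ via a conditioned analogue of Theorem~\ref{thm:main}, then split off $\rvau^{k}$ by the chain rule), but the step you defer as ``careful bookkeeping'' is a genuine gap, not bookkeeping. Your own decomposition gives $I(\rvax^{k}\to\rvay^{k}|\rvaq^{k})=I(\rvax^{k}\to\rvau^{k})+I(\rvar^{k},\rvap^{k};\rvay^{k}|\rvau^{k})+I(\rvar^{k},\rvap^{k};\rvaq^{k}|\rvau^{k},\rvay^{k})$, and passing from the last term to the stated cross-term $I(\rvaq^{k};\rvar^{k}|\rvau^{k},\rvay^{k})$ requires $I(\rvap^{k};\rvaq^{k}|\rvar^{k},\rvau^{k},\rvay^{k})=0$. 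This is not a chain-rule identity, and it does not follow from $(\rvaq,\rvas)\Perp(\rvar,\rvap)$: a Lemma~\ref{lem:not_so_obvious}-type argument would need $\rvaq\Perp(\rvap,\rvas)$ given $\rvar$, whereas here $\rvaq$ and $\rvas$ may be dependent. Concretely, take $\rvar$ constant, $\Ssp_{4}\equiv 0$ (so $\rvau\equiv 0$), $\rvae=\rvau$, $\rvax=\rvae\oplus\rvap$, $\rvay=\rvax\oplus\rvas$, with $\rvap\Perp(\rvaq,\rvas)$ but $\rvas$ strongly correlated with $\rvaq$; then $(\rvaq,\rvas)\Perp(\rvar,\rvap)$ holds yet $I(\rvap^{k};\rvaq^{k}|\rvar^{k},\rvau^{k},\rvay^{k})>0$, so no amount of chain-rule manipulation closes this term --- you must either supply an additional conditional-independence argument or keep the cross-term in the broader form $I(\rvaq^{k};\rvar^{k},\rvap^{k}|\rvau^{k},\rvay^{k})$. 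For what it is worth, the paper's own proof makes the same silent substitution when it invokes the chain from~\eqref{eq:lamisma_I} to~\eqref{eq:penultima}, so you reproduced its structure faithfully; but as a self-contained proof the step you yourself flagged as the main obstacle is left unresolved.

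Your Part~2 is essentially the paper's argument for $(\dagger)$: the paper's steps $(c)$ and $(d)$ rest on exactly the factorization you write, i.e.\ $\rvaq_{i+1}^{k}\Perp(\rvar^{i},\rvap^{i},\rvas^{i})\,|\,\rvaq^{i}$, obtained from $(\rvaq,\rvas)\Perp(\rvar,\rvap)$ together with~\eqref{eq:MC_q_and_s}; your per-term comparison is just a tighter packaging of the same idea. Two repairs are needed, though. First, reducing the per-term equality to $\rvas^{i}\Perp\rvaq_{i+1}^{k}\,|\,(\rvaq^{i},\rvay^{i-1},\rvax^{i-d_{3}(i)})$ is not enough: that only fixes the ``channel'' $p(\rvay(i)|\rvax^{i-d_{3}(i)},\rvay^{i-1},\rvaq^{i})$ under the extra conditioning, while equality of the two conditional mutual informations needs the joint law of $(\rvax^{i-d_{3}(i)},\rvay(i))$ given $(\rvay^{i-1},\rvaq^{i})$ to be unchanged, i.e.\ $\rvaq_{i+1}^{k}\Perp(\rvax^{i-d_{3}(i)},\rvay(i))\,|\,(\rvay^{i-1},\rvaq^{i})$. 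Fortunately your factorization delivers this stronger statement, since $(\rvax^{i-d_{3}(i)},\rvay^{i})$ is a deterministic function of $(\rvar^{i},\rvap^{i},\rvaq^{i},\rvas^{i})$. Second, your structural claim that $(\rvay^{i-1},\rvax^{i})$ is a function of $(\rvar^{i},\rvap^{i},\rvaq^{i-1},\rvas^{i-1})$ is false for some admissible delay patterns (e.g.\ $d_{1}=d_{2}=d_{4}=0$, $d_{3}=1$ makes $\rvax(i)$ depend on $\rvaq(i)$); since you condition on $\rvaq^{i}$ anyway, simply use $(\rvar^{i},\rvap^{i},\rvaq^{i},\rvas^{i})$ as the generating exogenous block, as the paper does.
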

Thus, provided $(\rvaq,\rvas)\Perp (\rvar,\rvap)$,~\eqref{eq:xtoyxtoucond} yields that~\eqref{eq:silder11} holds regardless of the invertibility of $\Ssp_{4}$, requiring instead that, for all $i\in\set{1,\ldots,k}$, any statistical dependence between $\rvaq^{k}$ and $\rvas^{i}$ resides only in $\rvaq^{i}$ (i.e., that Markov chain~\eqref{eq:MC_q_and_s} holds).

The results derived so far relate directed informations having either the same ``starting'' sequence or the same ``destination'' sequence. 
We finish this section with 
the following corollary, which follows directly by combining theorems~\ref{thm:DPI_dir_dir} and~\ref{thm:finally} and relates directed informations involving four different sequences internal to the loop.
\begin{coro}[Full Closed-Loop Directed Data Processing Inequality]\label{coro:full_D-DPI}
\textit{  For the system shown in Fig.~\ref{fig:diagramas}-(a), if 
$(\rvaq,\rvas)\Perp (\rvar,\rvap)$
and
$\rvaq\Perp\rvas$,
then
\begin{align}
 I(\rvax^{k}\to \rvay^{k})
\overset{(a)}{\geq} 
I(\rvae^{k}\to \rvau^{k}) 
+
I(\rvaq^{k}\,;\, \rvay^{k})
+
I(\rvar^{k}; \rvay^{k}| \rvau^{k})
\geq 
I(\rvae^{k}\to \rvau^{k}).
\end{align}
Equality holds in $(a)$ if, in addition, $\rvar\Perp\rvap$ (i.e., if $(\rvaq,\rvar,\rvap,\rvas)$ are mutually independent).}
\finenunciado
\end{coro}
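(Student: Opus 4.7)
The plan is to derive the corollary as a direct consequence of Theorems~\ref{thm:DPI_dir_dir} and~\ref{thm:finally}. As a preliminary book-keeping step, I verify that the hypotheses $(\rvaq,\rvas)\Perp(\rvar,\rvap)$ and $\rvaq\Perp\rvas$ together force the four-way factorization $\rvaq\Perp(\rvas,\rvar,\rvap)$: marginalizing the first condition yields $\rvaq\Perp(\rvar,\rvap)$ and $\rvas\Perp(\rvar,\rvap)$, and adjoining $\rvaq\Perp\rvas$ then gives $P(\rvaq,\rvas,\rvar,\rvap)=P(\rvaq)P(\rvas)P(\rvar,\rvap)$, so that $\rvaq$ is jointly independent of the remaining three.

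Under the Corollary's hypotheses I am exactly in the equality case of Theorem~\ref{thm:finally}, which provides the identity
\begin{align*}
 I(\rvax^k\to\rvay^k) = I(\rvax^k\to\rvau^k) + I(\rvaq^k;\rvay^k) + I(\rvar^k,\rvap^k;\rvay^k\mid\rvau^k).
\end{align*}
The chain rule~\eqref{eq:chainrule_I} then gives $I(\rvar^k,\rvap^k;\rvay^k\mid\rvau^k)\geq I(\rvar^k;\rvay^k\mid\rvau^k)$, with slack equal to the non-negative quantity $I(\rvap^k;\rvay^k\mid\rvar^k,\rvau^k)$. For the remaining term $I(\rvax^k\to\rvau^k)$, I invoke Theorem~\ref{thm:DPI_dir_dir} applied to the cyclically relabeled loop in which $\rvau$ assumes the role of the destination. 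Because the proof of Theorem~\ref{thm:DPI_dir_dir} exploits only the one-step recursive structure of the loop together with the independence of the destination-block's exogenous signal from the other three, the cyclic symmetry of the system in Fig.~\ref{fig:diagramas}-(a) delivers $I(\rvax^k\to\rvau^k)\geq I(\rvae^k\to\rvau^k)$ under precisely the independence $\rvaq\Perp(\rvas,\rvar,\rvap)$ established above. Chaining the identity, the chain-rule bound and this rotated DPI produces inequality $(a)$; the trailing $\geq I(\rvae^k\to\rvau^k)$ is immediate from non-negativity of the two remaining mutual informations.

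For the equality claim under the additional hypothesis $\rvar\Perp\rvap$ (i.e., mutual independence of $(\rvaq,\rvar,\rvap,\rvas)$), two distinct tightness conditions must be checked: the chain-rule slack $I(\rvap^k;\rvay^k\mid\rvar^k,\rvau^k)$ must vanish, and the rotated application of Theorem~\ref{thm:DPI_dir_dir} must hold with equality. I would obtain both by going back into the proofs of Theorems~\ref{thm:DPI_dir_dir} and~\ref{thm:finally} and tracking which conditional mutual informations constitute their slack terms, and showing that each of those terms vanishes when the four exogenous signals are jointly independent. This case analysis is the main technical obstacle; by contrast, inequality $(a)$ itself reduces to straightforward bookkeeping on top of the two cited theorems.
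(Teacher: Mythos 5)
Your skeleton for inequality $(a)$ — the equality form of Theorem~\ref{thm:finally}, the chain-rule bound $I(\rvar^{k},\rvap^{k};\rvay^{k}|\rvau^{k})\geq I(\rvar^{k};\rvay^{k}|\rvau^{k})$, and a Theorem~\ref{thm:DPI_dir_dir}-type step giving $I(\rvax^{k}\to\rvau^{k})\geq I(\rvae^{k}\to\rvau^{k})$ — is indeed the combination the paper has in mind, but your justification of the last step is incorrect as written. Cyclically relabeling the loop so that $\rvau$ is the destination also shifts the two sources in Theorem~\ref{thm:DPI_dir_dir}: the rotated statement is $I(\rvay^{k}\to\rvau^{k})\geq I(\rvax^{k}\to\rvau^{k})$, not the inequality you need. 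To compare the sources $\rvax$ and $\rvae$ with destination $\rvau$ you must merge $\Ssp_{3}$ and $\Ssp_{4}$ into a single forward block whose exogenous input is $(\rvas,\rvaq)$, and Theorem~\ref{thm:DPI_dir_dir} applied to that relabeled loop requires $(\rvas,\rvaq)\Perp(\rvar,\rvap)$ — precisely the Corollary's first hypothesis — rather than $\rvaq\Perp(\rvas,\rvar,\rvap)$ as you assert. The condition you invoke is not sufficient in general: take $\rvaq$ deterministic with $\Ssp_{4}$ the identity (so $\rvau=\rvay$), $\rvae=\rvar$, $\rvax=\rvae\oplus\rvap$, $\rvay=\rvax\ominus\rvas$ with $\rvas=\rvap$ and $\rvar,\rvap$ i.i.d.\ uniform; then $\rvaq\Perp(\rvas,\rvar,\rvap)$ holds while $I(\rvax^{k}\to\rvau^{k})=0<I(\rvae^{k}\to\rvau^{k})$. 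The step is easily repaired because the correct hypothesis is available in the Corollary, but the lemma you actually cite is false.

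The second, larger gap is the equality claim: you only announce a plan (``track the slack terms and show they vanish''), and that plan does not obviously succeed. By your own decomposition, equality in $(a)$ forces both $I(\rvap^{k};\rvay^{k}|\rvar^{k},\rvau^{k})=0$ and $I(\rvax^{k}\to\rvau^{k})=I(\rvae^{k}\to\rvau^{k})$; under mutual independence of $(\rvaq,\rvar,\rvap,\rvas)$ the generalized form of Theorem~\ref{thm:main} evaluates the latter gap as $I(\rvap^{k};\rvau^{k}|\rvar^{k})$, so the total slack is $I(\rvap^{k};\rvau^{k},\rvay^{k}|\rvar^{k})$. Nothing in the hypothesis of mutual independence makes this vanish — $\rvap$ still drives $\rvax$, hence $\rvay$ and $\rvau$ — and in a scalar Gaussian example with four summing junctions it is strictly positive. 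So the deferred ``case analysis'' is exactly where the difficulty lies, and your proposal leaves the equality part of the Corollary unestablished (indeed, tracking the slack as above suggests the equality assertion needs more than the stated independence, so this is not a step one can wave through).
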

To the best of our knowledge, Corollary~\ref{coro:full_D-DPI} is the first result available in the literature 
providing a lower bound to the gap between two nested directed informations, involving four different signals inside the feedback loop.
This result can be seen as the first full extension of the open-loop (traditional) data-processing inequality, to  arbitrary closed-loop scenarios. 
(Notice that there is no need to consider systems with more than four mappings, since all external signals entering the loop between a given pair of internal signals can be regarded as exogenous inputs to a single equivalent deterministic mapping.)

\section{Proofs}\label{sec:proofs}
We start with the proof of Theorem~\ref{thm:main}.
\begin{proof}[Proof of Theorem~\ref{thm:main}]
It is clear from Fig.~\ref{fig:diagramas}-(a) and from~\eqref{eq:block_defs}
that the relationship between $\rvar$, $\rvap$, $\rvaq$, $\rvas$, $\rvax$ and $\rvay$ can be represented by the diagram shown in Fig.~\ref{fig:Key}. 
\begin{figure}[h]
\centering
\input{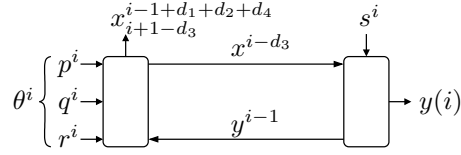}
\caption{Representation of the system of Fig.~\ref{fig:diagramas}-(b) highlighting the dependency between $p$, $q$, $r$, $s$, $x$ and $y$.
The dependency on $i$ of the delays $d_{1}(i),\ldots, d_{4}(i)$ is omitted for clarity.}
\label{fig:Key}
\end{figure}
From this diagram and Lemma~\ref{lem:not_so_obvious} (in the appendix) it follows that
if $\rvas$ is independent of $(\rvar,\rvap,\rvaq)$, then the following Markov chain holds:
\begin{align}
 \rvay(i) &
\longleftrightarrow 
(\rvax^{i-d_{3}(i)},\rvay^{i-1})
\longleftrightarrow 
(\rvap^{i},\rvaq^{i},\rvar^{i}).\label{eq:MC1}
\end{align}
Denoting the triad of exogenous signals $\rvap^{k},\rvaq^{k},\rvar^{k}$ by 
\begin{align}
 \theta^{k} \eq (\rvap^{k},\rvaq^{k},\rvar^{k}),
\end{align}
we have the following

\begin{subequations}\label{eq:lanueva}
\begin{align}
 I(\rvax^{k}\to \rvay^{i})
&=
\Sumfromto{i=1}{k}I(\rvay(i);\rvax^{i-d_{3}(i)}|\rvay^{i-1})
\nonumber\\& 
\overset{\eqref{eq:chainrule_I}}{=}
\Sumfromto{i=1}{k}
\left[
I(\theta^{i},\rvax^{i-d_{3}(i)};\rvay(i)|\rvay^{i-1})
-
I(\theta^{i};\rvay(i)|\rvax^{i-d_{3}(i)},\rvay^{i-1})
\right]
\nonumber\\& 
\overset{(a)}{=}
\Sumfromto{i=1}{k}
\left[
I(\theta^{i};\rvay(i)|\rvay^{i-1})
-
I(\theta^{i};\rvay(i)|\rvax^{i-d_{3}(i)},\rvay^{i-1})
\right]\label{eq:solo_directeds}
\\& 
\overset{(b)}{\leq}
\Sumfromto{i=1}{k}
I(\theta^{i};\rvay(i)|\rvay^{i-1})
\overset{(c)}{\leq}
\Sumfromto{i=1}{k}
I(\theta^{k};\rvay(i)|\rvay^{i-1})
\\&=
I(\theta^{k};\rvay^{k}).
\end{align}
\end{subequations}
In the above,
 $(a)$
follows from the fact that, if $\rvay^{i-1}$ is known, then $\rvax^{i-d_{3}(i)}$ is a deterministic function of $\theta^{i}$.
The resulting sums on the right-hand side of~\eqref{eq:solo_directeds} correspond to
$
 I(\rvaq^{k},\rvar^{k},\rvap^{k}\to\rvay^{k})
- 
I(\rvaq^{k},\rvar^{k},\rvap^{k}\to\rvay^{k}\parallel \rvax^{k})
$,
and thereby proving the first part of the theorem, i.e., the equality in~\eqref{eq:main_thm}.
In turn, $(b)$ stems from the non-negativity of mutual informations, turning into equality if $\rvas\Perp(\rvar,\rvap,\rvaq)$, as a direct consequence of the Markov chain in~\eqref{eq:MC1}.
Finally, equality holds in~$(c)$ if $\rvas\Perp(\rvaq,\rvar,\rvap)$, since
$\rvay$ depends causally upon $\theta$.
This shows that equality in~\eqref{eq:main_thm} is achieved if $\rvas\Perp(\rvaq,\rvar,\rvap)$, completing  the proof.
\end{proof}

\begin{proof}[Proof of Theorem~\ref{thm:from_splitting_more_precise}]
Apply the chain-rule identity~\eqref{eq:chainrule_I} to the RHS of~\eqref{eq:main_thm} to obtain
\begin{align}\label{eq:Itheta_to_sum}
  I(\theta^{k};\rvay^{k})
=
  I(\rvap^{k},\rvaq^{k},\rvar^{k};\rvay^{k})
=
  I(\rvap^{k},\rvaq^{k};\rvay^{k}|\rvar^{k})
+
I(\rvar^{k};\rvay^{k}).
\end{align}
Now, applying~\eqref{eq:chainrule_I} twice, one can express the term  
$I(\rvap^{k},\rvaq^{k};\rvay^{k}|\rvar^{k})$
as follows:
\begin{equation}\label{eq:estaotra}
 \begin{split}
 I(\rvap^{k},\rvaq^{k};\rvay^{k}|\rvar^{k})
&=
I(\rvap^{k},\rvaq^{k}\,;\,\rvay^{k},\rvar^{k})
-
I(\rvap^{k},\rvaq^{k};\rvar^{k})
=
I(\rvap^{k},\rvaq^{k}\,;\,\rvay^{k},\rvar^{k})
\\&
=
I(\rvap^{k},\rvaq^{k};\rvay^{k})
+
I(\rvap^{k},\rvaq^{k};\rvar^{k}|\rvay^{k}),
\end{split}
\end{equation}
where the second equality follows since $(\rvap^{k},\rvaq^{k})\Perp \rvar^{k}$. 
The result then follows directly by combining~\eqref{eq:estaotra} with~\eqref{eq:Itheta_to_sum} and~\eqref{eq:main_thm}.
\end{proof}

\begin{proof}[Proof of Theorem~\ref{thm:three_full_loops}]
Since $\rvaq\Perp(\rvar,\rvap,\rvas)$, 
 \begin{align}
 I(\rvax^{k}\to\rvay^{k})
&
\overset{(a)}{=}
I(\rvax^{k}\to\rvau^{k}) + I(\rvaq^{k};\rvay^{k}) + I(\rvar^{k},\rvap^{k};\rvay^{k}|\rvau^{k})
\\
&
\overset{(b)}{=}
I(\rvar^{k},\rvap^{k};\rvau^{k}) + I(\rvaq^{k};\rvay^{k}) + I(\rvar^{k},\rvap^{k};\rvay^{k}|\rvau^{k})
\\
&
\overset{(c)}{=}
I(\rvar^{k};\rvau^{k}) + I(\rvap^{k};\rvau^{k}|\rvar^{k}) + I(\rvaq^{k};\rvay^{k}) + I(\rvar^{k},\rvap^{k};\rvay^{k}|\rvau^{k}),
\label{eq:laotraultima}
\end{align}
where $(a)$ is due to Theorem~\ref{thm:finally}, $(b)$ follows from Theorem~\ref{thm:main} and the fact that $(\rvas,\rvaq)\Perp (\rvar,\rvap)$ and $(c)$ from the chain rule of mutual information.
For the second term on the RHS of the last equation, we have 
\begin{align}
 I(\rvap^{k};\rvau^{k}|\rvar^{k})
&
\overset{(a)}{=}
I(\rvap^{k};\rvau^{k}|\rvar^{k}) + I(\rvap^{k};\rvar^{k})
=
I(\rvap^{k};\rvar^{k},\rvau^{k})
\\&\overset{(b)}{=}
I(\rvap^{k};\rvar^{k},\rvau^{k},\rvae^{k})
-
I(\rvap^{k};\rvae^{k}|\rvar^{k},\rvau^{k})
\\&\overset{(c)}{=}
I(\rvap^{k};\rvar^{k},\rvau^{k},\rvae^{k})
\\&\overset{(d)}{=}
I(\rvap^{k};\rvae^{k})
+
I(\rvap^{k};\rvar^{k},\rvau^{k}|\rvae^{k})
\\&\overset{(e)}{=}
I(\rvap^{k};\rvae^{k})
+
I(\rvap^{k};\rvau^{k}|\rvae^{k})
+
I(\rvap^{k};\rvar^{k}|\rvau^{k},\rvae^{k})
\\&\overset{(f)}{=}
I(\rvap^{k};\rvae^{k})
+
I(\rvap^{k};\rvau^{k}|\rvae^{k}),
\label{eq:ultima_linea}
\end{align}
where $(a)$ holds since $\rvar\Perp\rvap$, $(b)$, $(d)$ and $(e)$ stem from the chain rule of mutual information~\eqref{eq:chainrule_I}, and $(c)$ is a consequence of the Markov chain 
$\rvae^{k} \leftrightarrow
(\rvau^{k},\rvar^{k})
\leftrightarrow
\rvap^{k}$ which is due to the fact that $\rvae^{k}=\Ssp_{1}(\rvau^{k-d_{1}(k)},\rvar^{k})$.
Finally, $(f)$ is due to the Markov chain 
$
\rvar^{k}
\leftrightarrow
(\rvau^{k},\rvae^{k})
\leftrightarrow
\rvap^{k}
$,
which holds because $\rvar\Perp(\rvap,\rvas,\rvaq)$ as a consequence of Lemma~\ref{lem:not_so_obvious} in the appendix (see also Fig.~\ref{fig:diagramas}-(a)).
Substitution of~\eqref{eq:ultima_linea} into~\eqref{eq:laotraultima} yields~\eqref{eq:nice}, thereby completing the proof.
\end{proof}

\begin{proof}[Proof of Theorem~\ref{thm:DPI_dir_dir}]
Since $(\rvap,\rvaq,\rvar)\Perp \rvas$, we can apply~\eqref{eq:la_de_Massey} (where now $(\rvaq,\rvar)$ plays 
the role of $\rvar$), 
and obtain
\begin{align}
 I(\rvax^{k}\to \rvay^{k})\geq I(\rvaq^{k},\rvar^{k};\rvay^{k}).
\end{align}
Now, we apply Theorem~\ref{thm:main}, which gives  
\begin{align}
 I(\rvaq^{k},\rvar^{k};\rvay^{k}) \geq  I(\rvae^{k}\to \rvay^{k}),
\end{align}
completing the proof.
\end{proof}

\begin{proof}[Proof of Theorem~\ref{thm:finally}]
Applying Theorem~\ref{thm:main}, since $(\rvar,\rvap) \Perp(\rvas,\rvaq)$, 
\begin{align}\label{eq:qtoy=r,u}
 I(\rvax^{k}\to \rvau^{k}) = I(\rvar^{k},\rvap^{k}\,;\, \rvau^{k}).
\end{align}
For the other directed information, we have that  
\begin{align}
 I(\rvax^{k}\to \rvay^{k})
&
\overset{(a)}{\leq}
I(\rvar^{k},\rvap^{k},\rvaq^{k}\,;\, \rvay^{k})
\nonumber
\\&
\overset{\eqref{eq:chainrule_I}}{=}
I(\rvaq^{k}\,;\, \rvay^{k})
+
I(\rvar^{k},\rvap^{k}\,;\, \rvay^{k}| \rvaq^{k})
\label{eq:lamisma_I}
\\&
\overset{\eqref{eq:chainrule_I}}{=}
I(\rvaq^{k}\,;\, \rvay^{k})
+
I(\rvar^{k},\rvap^{k}\,;\, \rvau^{k},\rvay^{k}| \rvaq^{k})
-
I(\rvar^{k},\rvap^{k};\rvau^{k}|\rvaq^{k},\rvay^{k})
\nonumber
\\&
\overset{(b)}{=}
I(\rvaq^{k}\,;\, \rvay^{k})
+
I(\rvar^{k},\rvap^{k}\,;\, \rvau^{k},\rvay^{k}| \rvaq^{k})
\nonumber
\\&
\overset{\eqref{eq:chainrule_I}}{=}
I(\rvaq^{k}\,;\, \rvay^{k})
+
I(\rvar^{k},\rvap^{k}\,;\, \rvau^{k},\rvay^{k},\rvaq^{k})
-
I(\rvar^{k},\rvap^{k};\rvaq^{k})
\nonumber
\\&
\overset{\eqref{eq:chainrule_I}}{=}
I(\rvaq^{k}\,;\, \rvay^{k})
+
I(\rvar^{k},\rvap^{k}\,;\, \rvau^{k})
+
I(\rvar^{k},\rvap^{k}; \rvay^{k},\rvaq^{k} | \rvau^{k})
-
I(\rvar^{k},\rvap^{k};\rvaq^{k})
\nonumber
\\&
\overset{(c)}{\leq}
I(\rvaq^{k}\,;\, \rvay^{k})
+
I(\rvar^{k},\rvap^{k}\,;\, \rvau^{k})
+
I(\rvar^{k},\rvap^{k}; \rvay^{k},\rvaq^{k} | \rvau^{k})
\nonumber
\\&
\overset{\eqref{eq:chainrule_I}}{=}
I(\rvaq^{k}\,;\, \rvay^{k})
+
I(\rvar^{k},\rvap^{k}\,;\, \rvau^{k})
+
I(\rvar^{k},\rvap^{k}; \rvay^{k}| \rvau^{k})
 + I(\rvaq^{k};\rvar^{k}|\rvau^{k},\rvay^{k})
\label{eq:penultima}
\\&
\overset{(d)}{\leq}
I(\rvaq^{k}\,;\, \rvay^{k})
+
I(\rvar^{k},\rvap^{k}\,;\, \rvau^{k})
+
I(\rvar^{k},\rvap^{k}; \rvay^{k}| \rvau^{k}),\label{eq:ladeabajo}
\end{align}
where 
$(a)$ follows from Theorem~\ref{thm:main}, which also states that equality is reached if and only if $(\rvar,\rvap,\rvaq) \Perp \rvas$.
In turn, 
$(b)$ is due to the fact that 
$\rvau^{k}$ is a deterministic function of $\rvaq^{k},\rvay^{k}$.
Equality $(c)$ holds if and only if $(r,\rvap)\Perp \rvaq$.
Finally,
from Lemma~\ref{lem:not_so_obvious} (in the appendix),
$(d)$ turns into equality if $\rvaq\Perp (\rvar,\rvap,\rvas)$. 
Substitution of~\eqref{eq:qtoy=r,u} into~\eqref{eq:ladeabajo} yields~\eqref{eq:finally}, completing the proof.
\end{proof}

\begin{proof}[Proof of Theorem~\ref{thm:xtoycond}]
We begin with the second part of the theorem, proving the validity of the equality $(\dagger)$ in~\eqref{eq:xtoyxtoucond}.
We have the following:
{\allowdisplaybreaks
\begin{align}
I(\rvax^{k}\to\rvay^{k}\parallel \rvaq^{k})
&
=
\Sumfromto{i=1}{k}
I(\rvay(i);\rvax^{i-d_{3}(i)}|\rvay^{i-1},\rvaq^{i})
\\&
 \overset{\eqref{eq:chainrule_I}}{=}
\Sumfromto{i=1}{k}
\left[
I(\rvar^{i},\rvap^{i},\rvax^{i-d_{3}(i)};\rvay(i)|\rvay^{i-1},\rvaq^{i})
-
I(\rvar^{i},\rvap^{i};\rvay(i)|\rvax^{i-d_{3}(i)},\rvay^{i-1},\rvaq^{i})
\right]
\\&
 \overset{(a)}{\leq}
\Sumfromto{i=1}{k}
I(\rvar^{i},\rvap^{i},\rvax^{i-d_{3}(i)};\rvay(i)|\rvay^{i-1},\rvaq^{i})
\\&
 \overset{(b)}{=}
\Sumfromto{i=1}{k}
I(\rvar^{i},\rvap^{i};\rvay(i)|\rvay^{i-1},\rvaq^{i})
\\&
 \overset{\eqref{eq:chainrule_I}}{=}
\Sumfromto{i=1}{k}
\left[
I(\rvar^{i},\rvap^{i},\rvaq_{i+1}^{k};\rvay(i)|\rvay^{i-1},\rvaq^{i})
-
I(\rvaq_{i+1}^{k};\rvay(i)|\rvay^{i-1},\rvaq^{i},\rvar^{i},\rvap^{i})
\right]
\\&
 \overset{(c)}{=}
\Sumfromto{i=1}{k}
\left[
I(\rvar^{i},\rvap^{i},\rvaq_{i+1}^{k};\rvay(i)|\rvay^{i-1},\rvaq^{i})
\right]
\\&
 \overset{\eqref{eq:chainrule_I}}{=}
\Sumfromto{i=1}{k}
\left[
I(\rvar^{i},\rvap^{i};\rvay(i)|\rvay^{i-1},\rvaq^{k})
+
I(\rvaq_{i+1}^{k};\rvay(i)|\rvay^{i-1},\rvaq^{i})
\right]
\\&
 \overset{(d)}{=}
\Sumfromto{i=1}{k}
I(\rvar^{i},\rvap^{i};\rvay(i)|\rvay^{i-1},\rvaq^{k})
\label{eq:directed_normal_cond}
\\&
 \overset{(e)}{\leq}
\Sumfromto{i=1}{k}
I(\rvar^{k},\rvap^{k};\rvay(i)|\rvay^{i-1},\rvaq^{k})
=
I(\rvar^{k},\rvap^{k};\rvay^{i}|\rvaq^{k})
\label{eq:Irprgivenq}
\end{align}
}
where equality holds in $(a)$ if and only if 
the Markov chain 
$
\rvas^{i}\leftrightarrow 
\rvaq^{i}\leftrightarrow  
(\rvar^{i},\rvap^{i})
$
holds for all $i\in\set{1,\ldots, k}$ (as a straightforward extension of Lemma~\ref{lem:not_so_obvious}).
In our case, the latter Markov chain holds since we are assuming $(\rvaq^{k},\rvas^{k})\Perp (\rvar^{k},\rvap^{k})$.
In turn, $(b)$ stems from the fact that, for all $i\in\set{1,\ldots, k}$,  
$\rvax^{i-d_{3}(i)}$ is a function of 
$\rvay^{i-1},\rvaq^{i},\rvar^{i},\rvap^{i}$.
To prove~$(c)$, we resort to~\eqref{eq:chainrule_I} and write 
\begin{align}\label{eq:I=0+0_first}
 I(\rvaq_{i+1}^{k};\rvay(i)|\rvay^{i-1},\rvaq^{i},\rvar^{i},\rvap^{i})
&=
 I(\rvaq_{i+1}^{k};\rvay^{i},\rvar^{i},\rvap^{i}|\rvaq^{i})
-
 I(\rvaq_{i+1}^{k};\rvay^{i-1},\rvar^{i},\rvap^{i}|\rvaq^{i})
\end{align}
From the definitions of the blocks (in~\eqref{eq:block_defs}), it can be seen that, given $\rvaq^{i}$, the triad of random sequences $(\rvay^{i},\rvar^{i},\rvap^{i})$ is a deterministic function of 
(at most)
$(\rvas^{i},\rvar^{i},\rvap^{i})$. 
Recalling that 
$(\rvaq^{k},\rvas^{k})\Perp (\rvar^{},\rvap^{k})$ and that 
$\rvaq_{i+1}^{k}\leftrightarrow\rvaq^{i}\leftrightarrow \rvas^{i}$ (see~\eqref{eq:MC_q_and_s}), it readily follows that
$
\rvaq_{i+1}^{k}\leftrightarrow\rvaq^{i}\leftrightarrow (\rvar^{i},\rvap^{i},\rvas^{i})
$, 
and thus each of the mutual informations on the right-hand-side of~\eqref{eq:I=0+0_first} is zero.
To verify the validity of $(d)$,
we use~\eqref{eq:chainrule_I} and obtain
\begin{align}\label{eq:I=0+0}
I(\rvaq_{i+1}^{k};\rvay(i)|\rvay^{i-1},\rvaq^{i})
=
I(\rvaq_{i+1}^{k};\rvay^{i}|\rvaq^{i})
-
I(\rvaq_{i+1}^{k};\rvay^{i-1}|\rvaq^{i}),
\end{align}
where $(d)$ now follows since 
$
0\leq
I(\rvaq_{i+1}^{k};\rvay^{i-1}|\rvaq^{i})
\leq 
I(\rvaq_{i+1}^{k};\rvay^{i}|\rvaq^{i})
\leq  
I(\rvaq_{i+1}^{k};\rvay^{i},\rvar^{i},\rvap^{i}|\rvaq^{i})$, 
where the last term in this chain of inequalities was shown to be zero in the proof of $(d)$.
Equality holds in $(e)$ if and only if 
$(\rvar^{k},\rvap^{k})
\leftrightarrow
(\rvar^{i},\rvap^{i},\rvaq^{i},\rvay^{i-1})
\leftrightarrow
\rvay(i)
$, 
a Markov chain which is satisfied in our case from the fact that $(\rvaq,\rvas)\Perp(\rvar,\rvap)$ and from Lemma~\ref{lem:not_so_obvious}.

Finally, 
since $(\rvar^{k},\rvap^{k})\Perp(\rvaq^{k},\rvas^{k})$,
we have that the chain of equalities from~\eqref{eq:lamisma_I} to~\eqref{eq:penultima} holds, from which we conclude that
\begin{align}
 I(\rvar^{k},\rvap^{k};\rvay^{i}|\rvaq^{k})
=
I(\rvar^{k},\rvap^{k}\,;\, \rvau^{k})
+
I(\rvar^{k},\rvap^{k}; \rvay^{k}| \rvau^{k})
 + I(\rvaq^{k};\rvar^{k}|\rvau^{k},\rvay^{k}).
\end{align}
Inserting this result into~\eqref{eq:Irprgivenq} and invoking Theorem~\ref{thm:main} we arrive at equality $(\dagger)$ in~\eqref{eq:xtoyxtoucond}.

To prove the first equality the~\eqref{eq:xtoyxtoucond}, it suffices to notice that $I(\rvax^{k}\to\rvay^{k}|\rvaq^{k})$ corresponds to the sum on the right-hand-side of~\eqref{eq:directed_normal_cond}, from where we proceed as with the first part.
This completes the proof of the theorem.
\end{proof}

\section{Potential Applications}\label{sec:possible_applications}
Information inequalities and, in particular, the data-processing inequality, have played a fundamental role in Information Theory and its applications~\cite{zivzak73,demcov91,zamir-98,covtho06,zolalu10,weiulu11,trizam11,merhav12}.
It is perhaps the lack of a similar body of results associated with the directed information (and with non-asymptotic, causal information transmission) which has limited the extension of many important information-theoretic ideas and insights to situations involving feedback or causality constraints~\cite{naifag07,derost12}.
Two such areas, already mentioned in this paper, are the understanding of the fundamental limitations arising in networked control systems over noiseless digital channels, and causal rate distortion problems.  
In those contexts, causality is of paramount relevance an thus the directed information appears, naturally, as the appropriate measure of information flow (see, for example,~\cite{silder11,derost12,silder11b,tatiko08,elia04} and~\cite{mardah05}).  
We believe that our results might help gaining insights into the fundamental trade-offs underpinning those problems, and might also allow for the solution of open problems such as, for instance, characterizing the minimal average data-rate that guarantees a given performance level~\cite{silder10} (an improved version of the latter paper, which extensively uses the results derived here, is currently under preparation by the authors).
On a different vein, directed mutual information plays a role akin to that of (standard) mutual information when characterizing channel feedback capacity (see, e.g.,~\cite{tatmit09,li-eli11} and the references therein).
Our results may also play a role in expanding the understanding of communication problems over channels used with feedback, particularly when including in the analysis additional exogenous signals such as a random channel state, interference and, in general, any form of side information.
Thus, we hope that the inequalities and identities presented in Section~\ref{sec:results} may help in extending results such as dirty-paper coding~\cite{costa-83}, 
watermarking~\cite{cohlap02}, 
distributed source coding~\cite{slewol73,wynziv76,weiulu11,trizam11}, 
multi-terminal coding~\cite{oohama97,zamsha02},
and data encryption~\cite{johish04}, 
to scenarios involving causal feedback.

\section{Conclusions}\label{sec:conclusions}
In this paper, we have derived fundamental relations between mutual and directed informations in general discrete-time systems with feedback. 
The first of these is an inequality between the directed information between to signals inside the feedback loop 
and the mutual information involving a subset of all the exogenous incoming signals.
The latter result can be interpreted as a law of conservation of information flows for closed-loop systems. 
Crucial to establishing these bounds was the repeated use of chain rules for conditional mutual information as well as the development of new Markov chains. 
The proof techniques do not rely upon properties of entropies or distributions, and the results hold in very general cases including non-linear, time-varying and stochastic systems with arbitrarily distributed signals. Indeed, the only restriction is that all blocks within the system must be causal mappings, and that their combined delay must be at least one sample. 
A new generalized data processing inequality was also proved, which is valid for nested directed informations within the loop. 
A key insight to be gained from this inequality was that the further apart the signals are in the loop, the lower is the directed information between them. 
This closely resembles the behavior of mutual information in open loop systems, where it is well known that any independent processing of the signals can only reduce their mutual information.

\section{Appendix}
\begin{lem}\label{lem:not_so_obvious}
 In the system shown in Fig.~\ref{fig:2systems}, the exogenous signals $\rvar,\rvaq$ are mutually independent and $\Ssp_{1},\Ssp_{2}$ are deterministic (possibly time-varying) causal maps characterized by
$\rvay^{i}=\Ssp_{1}(\rvar^{i},\rvau^{i})$, 
$\rvau^{i}=\Ssp_{2}(\rvaq^{i},\rvay^{i})$, $\forall i\in\set{1,\ldots,k}$, for some $k\subset\Nl$.
\begin{figure}[htpb]
\centering
 \input{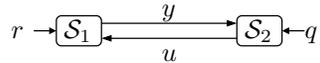}
\caption{Two arbitrary causal systems $\Ssp_{1}, \Ssp_{2}$ interconnected in a feedback loop. 
The exogenous signals $\rvar,\rvaq$ are mutually independent.}
\label{fig:2systems}
\end{figure}
For this system, the following Markov chain holds
\begin{align}
 \rvar^{k}\longleftrightarrow
(\rvau^{k},\rvay^{k})
\longleftrightarrow
\rvaq^{k},\fspace \forall k\in\mathbb{K}.
\end{align}
\end{lem}

\begin{proof}
Since 
$\rvay^{k}=\Ssp_{1}(\rvar^{k},\rvau^{k})$ 
and
$\rvau^{k}=\Ssp_{2}(\rvaq^{k},\rvay^{k})$ 
are deterministic functions, it follows that for every possible pair of sequences $y^{k},u^{k}$, the sets $\rho_{y^{k},u^{k}}\eq\set{r^{k}: y^{k}=\Ssp_{1}(r^{k},u^{k}) }$ 
and
$\phi_{y^{k},u^{k}}\eq\set{q^{k}: u^{k}=\Ssp_{2}(q^{k},y^{k}) }$ 
are also deterministic.
Thus, 
$(\rvau^{k},\rvay^{k})=(u^{k},y^{k})\iff \rvar^{k}\in\rho_{y^{k},u^{k}}$ and
$(\rvau^{k},\rvay^{k})=(u^{k},y^{k})\iff \rvaq^{k}\in\phi_{y^{k},u^{k}}$.
This means that for every pair of Borel sets $(R,Q)$ of appropriate dimensions,
\begin{align*}
\Pr\set{\rvar^{k}\in R &, \rvaq^{k}\in Q|\rvay^{k}=y^{k},\rvau^{k}=u^{k}} 
\\&
\overset{\hphantom{(a)}}{=} 
\Pr\set{\rvar^{k}\in R , \rvaq^{k}\in Q|\rvar^{k}\in\rho_{y^{k},u^{k}}\,,\, \rvaq^{k}\in\phi_{y^{k},u^{k}} }
\\&
\overset{\hphantom{(a)}}{=} 
\Pr\set{\rvar^{k}\in R |\rvar^{k}\in\rho_{y^{k},u^{k}}\,,\, \rvaq^{k}\in\phi_{y^{k},u^{k}} }
\Pr\set{\rvaq^{k}\in Q |\rvar^{k}\in (\rho_{y^{k},u^{k}}\cap R)\,,\, \rvaq^{k}\in\phi_{y^{k},u^{k}} }
\\&
\overset{(a)}{=}
\Pr\set{\rvar^{k}\in R |\rvar^{k}\in\rho_{y^{k},u^{k}}}
\Pr\set{\rvaq^{k}\in Q |\rvaq^{k}\in\phi_{y^{k},u^{k}} }
\\&
\overset{\hphantom{(a)}}{=} 
\Pr\set{\rvar^{k}\in R |\rvay^{k}=y^{k},\rvau^{k}=u^{k}}
\Pr\set{\rvaq^{k}\in Q |\rvay^{k}=y^{k},\rvau^{k}=u^{k}},
\end{align*}
where $(a)$ follows from the fact that $\rvar^{k}\Perp \rvaq^{k}$.
This completes the proof.
\end{proof}

\bibliographystyle{\BibPath/IEEEtran}

\end{document}